 \def\LaTeX{\leavevmode L\raise.42ex
   \hbox{\kern-.3em\size{\sf@size}{0pt}\selectfont A}\kern-.15em\TeX}
\newcommand{\BibTeX}{{\rm B\kern-.05em{\sc
i\kern-.025emb}\kern-.08em\TeX}}
\newtheorem{col}{Corollary}[section]
\newtheorem{thm}{Theorem}[section]
\newtheorem{lem}[thm]{Lemma}
\newtheorem{rem}{Remark}
\theoremstyle{definition}
\newtheorem{defn}{Definition}
\newtheorem{exmp}{Example}
\numberwithin{equation}{section}
\begin{document}

\title[Sampling, filtering and approximation on
 graphs] {Sampling, filtering  and sparse
approximations  on combinatorial graphs}

\maketitle

\begin{center}

\author{Isaac Z. Pesenson }\footnote{ Department of Mathematics, Temple University,
 Philadelphia,
PA 19122; pesenson@math.temple.edu. The author was supported in
part by the National Geospatial-Intelligence Agency University
Research Initiative (NURI), grant HM1582-08-1-0019. }

\author{Meyer Z. Pesenson }\footnote{ Spitzer Science Center, California Institute of
Technology, MC 314-6, Pasadena, CA 91125;  misha@ipac.caltech.edu.
The author was supported in part by the National
Geospatial-Intelligence Agency University Research Initiative
(NURI), grant HM1582-08-1-0019.}
\end{center}

 \begin{abstract}
In this paper we address sampling and approximation of functions
  on combinatorial graphs. We develop filtering on graphs by using
  Schrodinger's group of operators generated by combinatorial
  Laplace operator.
   Then we construct  a sampling theory by proving
   Poincare and  Plancherel-Polya-type inequalities for functions
   on graphs. These
  results lead to
      a theory of sparse approximations on graphs and
      have potential applications to  filtering, denoising,
      data dimension reduction,
  image processing, image compression, computer graphics, visualization and learning theory.

\end{abstract}

 {\bf Keywords and phrases: Combinatorial Laplace operator,
Poincare and Plancherel-Polya inequalities, Paley-Wiener spaces,
best approximations, sparse approximations, Schr\"{o}dinger
Semigroup, modulus of continuity, Hilbert frames.}

 {\bf Subject classifications:}
{Primary: 42C99, 05C99, 94A20; Secondary: 94A12 }

 \section{Introduction}

During the last years harmonic analysis on combinatorial graphs
attracted considerable attention. The interest  is stimulated in
part by multiple existing and potential  applications of analysis
on graphs to information theory, signal analysis, image
processing, computer sciences, learning theory, astronomy
\cite{BN}, \cite{BMN},  \cite{BCMS}--\cite{CM2}, \cite{HK},
\cite{PWR}, \cite{MP1}-- \cite{RWP}.

Some of the approaches to  large data sets or images consider them
as graphs. However, for hyperspectral images, for example, this
 leads to graphs with too many vertices imbedded into  high  dimensional
  spaces, thus making dimension reduction necessary for effective data
mining.

It seems that one  possible way to approach this problem is by
using ideas from the classical sampling theory which has already
proved  very fruitful in various  branches  of applied
mathematics.

Let us remind  the  Classical Shannon-Nyquist sampling Theorem. It
states that for all Paley-Wiener functions of a fixed bandwidth
defined on Euclidean space one can find "not  very dense" sampling
sets which can be used to represent all relevant Paley-Wiener
functions. In some sense it allows to reduce the set of all points
of Euclidean space to a countable set of points. Moreover, since
the set of all Paley-Wiener functions is dense in the space
$L_{2}(\mathbb{R}^{d})$ one can use this property to approach
sampling of non-Paley-Wiener functions.

The goal of this work, to show that analysis of lower frequencies
on a graph can be performed on a smaller subgraph. Note that in
many situations lower frequencies are more informative while
higher frequencies are usually associated with  noise.

Let us consider an example. Suppose that   a data set is presented
by $10^{6}$ points. One way of data mining \cite{BN}, \cite{BMN}
is to convert the data set to a graph and develop harmonic
analysis associated with a corresponding combinatorial Laplace
operator. Let us assume for the simplicity that we identify our
data set with the path graph $\mathbb{Z}_{10^{6}}$ of $10^{6}$
vertices. We measure frequency on this graph in terms of the
eigenvalues of the corresponding Laplace operator on
$\mathbb{Z}_{10^{6}}$ whose definition is given in section 4. It
has $10^{6}$ eigenvalues (frequencies) which all belong to the
interval $[0, 4] $ and are given by the formula
$$
2-2\cos\frac{\pi k}{10^{6}-1},\>\>\>\>\> k=0,1,...,10^{6}-1.
$$
Our results show, that if one will delete every second point from
$\mathbb{Z}_{10^{6}}$ then the resulting set will be a uniqueness
set and even a sampling set (see definitions bellow) for all
functions on $\mathbb{Z}_{10^{6}}$ which are linear combinations
of  the (about) $12\times 10^{4}$  first eigenfunctions. If one
will delete about $2/3$ of all points then the resulting set is a
sampling set for all functions on $\mathbb{Z}_{10^{6}}$ which are
linear combinations of  the (about) $6\times 10^{4}$  first
eigenfunctions. By extending our reasoning it is possible to show
that about 10 percent of "uniformly distributed" points of
$\mathbb{Z}_{10^{6}}$ form a sampling set for functions on
$\mathbb{Z}_{10^{6}}$ which  are linear combinations of  the about
$500$  first eigenfunctions.

Thus by applying an appropriate filtering to a function  on a
graph, i.e. by removing high frequencies we not only remove noise
but we also reduce analysis on a whole graph to analysis on a much
smaller subgraph without loosing many of the lower frequencies. We
also give estimates of possible losses   of information which can
occur after filtering.

In order to construct a sampling theory on combinatorial graphs we
 prove certain analogs of Poincare inequality on graphs.
Our Poincare inequalities in the Section 2 provide estimates of
the norm of a function in terms of its "derivatives".

In what follows we introduce few basic notions and formulate and
discuss one of our Poincare inequalities.
 We consider finite
or infinite and in the later case countable connected graphs
$G=(V(G),E(G))$, where $V(G)$ is its set of vertices and $E(G)$ is
its set of edges. We consider only simple (no loops, no multiple
edges) undirected unweighted graphs. A number of vertices adjacent
to a vertex $v$ is called the degree of $v$ and denoted by $d(v)$.
We assume that all vertices have finite degrees but we do not
assume that the set of degrees of all vertices
$\left\{d(v)\right\}_{v\in V(G)}$ is bounded.

 The space $L_{2}(G)$ is the Hilbert space of all
complex-valued functions $f:V(G)\rightarrow \mathbb{C}$ with the
following inner product
$$
\left<f,g\right>= \sum_{v\in V(G)}f(v)\overline{g(v)}
$$
and the following norm
$$
\|f\|=\left(\sum_{v\in V(G)}|f(v)|^{2}\right)^{1/2}.
$$

By the adjacency matrix $A$ of $G$ we understand a matrix with
entries  $\{a_{uv}\}, u,v\in V(G)$, where $a_{uv}=1$, if vertices
$u$ and $v$ are adjacent, and $a_{uv}=0$ otherwise.

 Let $A$ be the adjacency matrix of $G$ and $D$
be a diagonal matrix whose entree on main diagonal are degrees of
the corresponding vertices. Then we consider the following version
of the discrete Laplace operator on $G$
\begin{equation}
L=D-A,
\end{equation}
or explicitly
$$
Lf(v)=\sum_{u\sim v}\left(f(v)-f(u)\right),f\in L_{2}(G),
$$
where notation $u\sim v$ means that $u$ and $v$ are adjacent
vertices. Note that this operator is different from the normalized
Laplace operator $\mathcal{L}$ is defined in \cite{Ch} and which
was considered in our previous papers  \cite{Pes3}-- \cite{Pes5}.

 The Laplace operator $L$ is self-adjoint and
positive definite in the space $L_{2}(G)$. Moreover, if degrees of
all vertices are uniformly bounded
\begin{equation}
D(G)=\max_{v\in V(G)}d(v)<\infty\label{degree condition}
\end{equation}
then the operator $L$ is bounded and its spectrum $\sigma(L)$ is a
subset of the interval $[0, 2D(G)]$. Note, that for the normalized
version of the Laplace operator $\mathcal{L}$ the  spectrum is
always a subset of $[0, 2]$. 

 Given a proper subset of vertices $S\subset V(G)$ its
vertex boundary $bS$ is the set of all vertices in $V(G)$ which
are not in $S$ but adjacent to a vertex in S
$$
b S=\left\{v\in V(G)\backslash S:\exists \{u,v\}\in E(G), u\in
S\right\}.
$$

If a graph $G=(V(G),E(G))$ is connected and $S$ is a proper subset
of $V$ then the vertex boundary $b S$ is not empty.

We will also use  the following notation
\begin{equation}
D(S)=\max_{v\in  S}d(v), S\subset V(G).
\end{equation}

To illustrate our Poincare inequalities let us formulate and
discuss a particular case of a more general inequality proved in
the Theorems \ref{MainPoincare} and \ref{PowerPoinc}. The
following inequality gives an estimate of the norm of a function
trough its "first order derivatives" and in this sense it can be
considered as a global Poincare inequality.
\begin{thm}
If $S$ is a subset of vertices such that every vertex $v$  in $bS$
is connected to at least $K_{0}=K_{0}(S)$ vertices in $S$ and
\begin{equation}
 \overline{S}=S\cup bS=V(G),
\end{equation}
then the following inequality holds for all $f\in L_{2}(G)$
\begin{equation}
\|f\|\leq \left\{\sum_{u\in
S}\left(\frac{2d_{0}(u)}{K_{0}}+1\right)|f(u)|^{2}\right\}^{1/2}+
\frac{2}{\sqrt{K_{0}}}\left\|L^{1/2} f\right\|,\label{1}
\end{equation}
where $d_{0}(u),\>\>u\in S,$ is the number of vertices in $bS$
adjacent to $u\in S$. \label{T.1.1}
\end{thm}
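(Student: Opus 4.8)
The plan is to estimate $\|f\|^{2}$ by splitting it across the disjoint decomposition $V(G)=S\cup bS$ and then controlling the boundary contribution by the values of $f$ on $S$ together with the Dirichlet energy $\|L^{1/2}f\|^{2}$. First I would write
$$
\|f\|^{2}=\sum_{u\in S}|f(u)|^{2}+\sum_{v\in bS}|f(v)|^{2},
$$
which is exact because $bS\subseteq V(G)\setminus S$ and $S\cup bS=V(G)$ by hypothesis. The whole problem then reduces to bounding the boundary sum $\sum_{v\in bS}|f(v)|^{2}$.

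For the boundary term I would use an averaging representation over $S$-neighbors. For $v\in bS$ let $N_{S}(v)$ denote its neighbors lying in $S$ and $m(v)=|N_{S}(v)|\geq K_{0}$. Writing $f(v)$ as the average of $f(u)$ over $u\in N_{S}(v)$ plus the averaged differences,
$$
f(v)=\frac{1}{m(v)}\sum_{u\in N_{S}(v)}f(u)+\frac{1}{m(v)}\sum_{u\in N_{S}(v)}\left(f(v)-f(u)\right),
$$
and then applying $|a+b|^{2}\leq 2|a|^{2}+2|b|^{2}$ followed by the Cauchy--Schwarz inequality to each of the two inner sums (each of length $m(v)\geq K_{0}$), I obtain
$$
|f(v)|^{2}\leq \frac{2}{K_{0}}\sum_{u\in N_{S}(v)}|f(u)|^{2}+\frac{2}{K_{0}}\sum_{u\in N_{S}(v)}\left|f(v)-f(u)\right|^{2}.
$$

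Summing over $v\in bS$ and interchanging the order of summation in the first term converts $\sum_{v\in bS}\sum_{u\in N_{S}(v)}|f(u)|^{2}$ into $\sum_{u\in S}d_{0}(u)|f(u)|^{2}$, since each $u\in S$ is counted once for every boundary vertex adjacent to it, and $d_{0}(u)$ is by definition exactly that count. For the second term I would invoke the standard Dirichlet-form identity that follows at once from the explicit formula $Lf(v)=\sum_{u\sim v}(f(v)-f(u))$, namely
$$
\left\|L^{1/2}f\right\|^{2}=\left<Lf,f\right>=\sum_{\{u,v\}\in E(G)}\left|f(v)-f(u)\right|^{2}.
$$
The pairs $(v,u)$ with $v\in bS$ and $u\in N_{S}(v)$ range over the edges joining $bS$ to $S$, each counted exactly once because $S$ and $bS$ are disjoint, so $\sum_{v\in bS}\sum_{u\in N_{S}(v)}|f(v)-f(u)|^{2}\leq \|L^{1/2}f\|^{2}$. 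Combining these two estimates yields
$$
\sum_{v\in bS}|f(v)|^{2}\leq \frac{2}{K_{0}}\sum_{u\in S}d_{0}(u)|f(u)|^{2}+\frac{2}{K_{0}}\left\|L^{1/2}f\right\|^{2}.
$$

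Adding $\sum_{u\in S}|f(u)|^{2}$ back gives
$$
\|f\|^{2}\leq \sum_{u\in S}\left(\frac{2d_{0}(u)}{K_{0}}+1\right)|f(u)|^{2}+\frac{2}{K_{0}}\left\|L^{1/2}f\right\|^{2},
$$
and taking square roots with $\sqrt{a+b}\leq\sqrt{a}+\sqrt{b}$ produces the claimed inequality (in fact with the slightly sharper constant $\sqrt{2}/\sqrt{K_{0}}\leq 2/\sqrt{K_{0}}$ in front of $\|L^{1/2}f\|$, so the stated bound follows a fortiori). The step needing the most care is the bookkeeping in the two interchanges of summation: I must verify that the coefficient generated for each $u\in S$ is precisely $d_{0}(u)$, and that the boundary edge-differences form a genuine subset of all edges, counted without repetition, so that their sum is dominated by the full Dirichlet energy $\|L^{1/2}f\|^{2}$.
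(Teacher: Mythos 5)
Your proof is correct and follows essentially the same route as the paper's: bound $|f(v)|^2$ for $v\in bS$ by averaging over its $S$-neighbors, sum over $bS$ with the $d_0(u)$ counting, and dominate the boundary edge-differences by the Dirichlet form $\left<Lf,f\right>=\|L^{1/2}f\|^2$. The only difference is that by comparing directly with the edge-sum (each edge counted once) rather than passing through the gradient identity $\|\nabla f\|^2=2\|L^{1/2}f\|^2$ as the paper does, you obtain the marginally sharper constant $\sqrt{2}/\sqrt{K_0}$ in place of $2/\sqrt{K_0}$, which of course still implies the stated inequality.
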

\begin{exmp}

Suppose that $G$ is a star $\{v_{0}, v_{1}, ..., v_{N}\}$ whose
center is $v_{0}$. Let $S$ be the vertex $\{v_{0}\}$. Then
$K_{0}=1, d_{0}(v_{0})=N$, and (\ref{1}) becomes
$$
\|f\|\leq \sqrt{2N+1}|f(v_{0})|+2\|L^{1/2}f\|.
$$
In particular, if  $f(v_{j})=1$ for all $ 0\leq j\leq N$ then $
\|f\|=\sqrt{N+1}, \|L^{1/2} f\|=0$, and (\ref{1}) becomes
$$
\sqrt{N+1}\leq \sqrt{2N+1}.
$$
\end{exmp}

\begin{exmp}
For the same star graph as above we consider
$S=\{v_{1},...,v_{N}\}$, then $K_{0}(S)=N, d_{0}(v_{0})=1, $  and
for any $N$  the inequality (\ref{1}) becomes
$$
\|f\|\leq \sqrt{\frac{2}{N}+1}\sqrt{\sum_{u\in
S}|f(u)|^{2}}+\frac{2}{\sqrt{N}}\|L^{1/2}f\|.
$$
In particular, if  we consider function $f$ such that $f(v_{0})=0$
and $f(v_{j})=1$  for all other $j=1,2,..,N$, then $
\|f\|=\sqrt{N}, \|L^{1/2} f\|=\sqrt{N}$, and for any $N$  the
inequality (\ref{1}) becomes
$$
\sqrt{N}\leq \sqrt{N+2}+2.
$$
For the function $f$ which is identical one, we obtain
$$
\sqrt{N+1}\leq \sqrt{N+2}.
$$
\end{exmp}

\begin{exmp}
Let $C_{N}$ be a cycle of $N$ vertices $\{v_{1},...,v_{N}\}$. Take
another vertex $v_{0}$ and make a graph $C_{N}\cup \{v_{0}\}$ by
connecting $v_{0}$ to each of $v_{1},...,v_{N}$.

 Let
$\lambda_{k}(N)$ be a
  non-zero eigenvalue of the operator $L$ on the graph $C_{N}$
and let $\varphi_{k}$ be a corresponding orthonormal
eigenfunction. Construct a function $\widetilde{\varphi_{k}}$ on
the graph $C_{N}\cup \{v_{0}\}$  such that
$\widetilde{\varphi}_{k}(v)=\varphi_{k}(v)$ if $v\in C_{N}$ and
$\widetilde{\varphi}_{k}(v_{0})=0$. Since $\varphi_{k}$ is
orthogonal to the constant function $\textbf{1}$ we have that
$$
\sum_{v_{j}\in C_{N}}\varphi_{k}(v_{j})=0
$$
and it implies that for the operator $L$ on $C_{N}\cup \{v_{0}\}$
$$
L\widetilde{\varphi}_{k}(v_{0})=0.
$$
Clearly,  for every $v_{j}\in C_{N}$ one has
$$
L\widetilde{\varphi}(v_{j})=L\varphi_{k}(v_{j})+\varphi(v_{j})=
\left(\lambda_{k}(N)+1)\varphi(v_{j}\right) .
$$ Thus,
$$
L\widetilde{\varphi}_{k}=\left(\lambda_{k}(N)+1\right)\widetilde{\varphi}_{k},
$$
and since $\|\widetilde{\varphi}_{k}\|=1$ we have that
$$
\|L^{1/2}\widetilde{\varphi}_{k}\|=\left(\lambda_{k}(N)+1\right)^{1/2}.
$$
Let $S$ be the graph $C_{N}=\{v_{1},...,v_{N}\}$. In this case the
boundary of $S$ is the point $v_{0}$,  $K_{0}=N$, $d_{0}(v_{j})=1$
and then for the function $\widetilde{\varphi}_{k}$ the inequality
(\ref{1}) takes the following form
$$
1\leq \sqrt{\frac{2}{N}+1}+2\sqrt{\frac{\lambda_{k}(N)+1}{N}}.
$$
Since for $k=1$ the eigenvalue $\lambda_{1}(N)$ goes to zero when
$N$ goes to infinity, we see that the right-hand side of the last
inequality can be made arbitrary close to one.

For any $k=1,...,N$  one has the estimate $\lambda_{k}(N)\leq 2N$
and for the corresponding $\widetilde{\varphi}_{k}$ it gives the
inequality
$$
1\leq \sqrt{\frac{2}{N}+1}+2\sqrt{\frac{1}{N}+2}.
$$
\end{exmp}

According to our definition of Paley-Wiener functions (see Section
3, and also \cite{Pes7}-\cite{Pes3}) they always satisfy the
Bernstein inequality and together with Poincare inequality it
leads to Plancherel-Polya inequalities on graphs.

Our Poincare-Polya-type inequalities (Theorem \ref{GraphPP}) give
two-sided estimate of the norm of an appropriate Paley-Wiener
function in terms of its values on a subset of vertices. We use
these estimates to apply classical ideas of Duffin and Schaeffer
\cite{DS} about Hilbert frames to obtain the sampling Theorem
\ref{SamplingTheorem} which is one of the main results of the
paper. In particular we obtain a formula which represents
Paley-Wiener functions in terms of their values on specific
subgraphs. We call them  \textbf{sparse representations } of
Paley-Wiener function.

In Section 4 we construct a filtering operator (Theorem \ref{FT})
using Schrodinger's one-parameter group of operators generated by
a self-adjoint positive definite operator $L$ in the Hilbert space
$L_{2}(G)$. This filtering operator maps entire Hilbert space
$L_{2}(G)$ into appropriate Paley-Wiener space. We also prove our
version of the Direct Approximation Theorem using a modulus of
continuity expressed in terms of the Schrodinger group of
operators generated by $L$ (Theorem \ref{DAT}). By combining
filtering procedure with our  sampling theory, we obtain
\textbf{sparse approximations} to functions in $L_{2}(G)$.

 We would like to emphasize that the
sampling theory that is developed in the present article is
different from the one we had developed in \cite{Pes3},
\cite{Pes4}, \cite{Pes5}, \cite{PesPes}. We also have to mention
that our approach to sampling on graphs is very different from
methods which were presented and explored in \cite{FrT}, \cite{G},
\cite{MSW}. Note, that our approximation theory on graphs is a
generalization of the classical approximation theory by
Paley-Wiener functions \cite{Akh}, \cite{N}. It also has to be
mentioned that  some ideas about approximation theory on compact
metric spaces (which include finite graphs) were introduced in
\cite{MM}.  Basic ideas of harmonic analysis on graphs that are
relevant to our paper were recently summarized in the book
\cite{Mah}.

Our results can have  applications to filtering, denoising,
approximation and compression of functions on graphs. These tasks
are of central importance to  data dimension reduction, image
processing, computer graphics, visualization and learning theory.

\bigskip

\section{ Poincare inequalities on combinatorial graphs}

For a function $f\in L_{2}(G)$ we introduce a measure of
smoothness which is the  norm of a "gradient"
\begin{equation}
\left\|\nabla f\right\|^{2}=\sum_{v\sim u}|f(v)-f(u)|^{2},
\end{equation}
where the sum is taken over all unordered pairs$\{v, u\}$ for
which $v$ and $u$ are adjacent.  Given a subset $W\subset V$ we
will use the notation
\begin{equation}
\left\|\nabla f\right\|^{2}_{W}=\sum_{v\sim u, v,u\in
W}|f(v)-f(u)|^{2}.
\end{equation}
 For any $S$ which is a subset of vertices of $G$
we introduce the following operator
\begin{equation}
cl^{0}(S)=S,\>\>cl(S)=S\cup bS,
cl^{m}(S)=cl\left(cl^{m-1}(S)\right), m\in \mathbb{N}, S\subset
V(G).
\end{equation}
We will use the following notion of the relative degree.
\begin{defn}
 For a vertex $v\in cl^{m}(S)$ we introduce the
relative degree $d_{m}(v)$ as the number of vertices in the
boundary $b\left(cl^{m}(S)\right)$ which are adjacent to $v$:
$$
d_{m}(v)=card\left\{w\in b\left(cl^{m}(S)\right): w\sim v\right\}.
$$
 For any $ S\subset V(G)$ we introduce the following
notation
$$
 D_{m}=D_{m}(S)=\sup_{v\in cl^{m} (S)}d_{m}(v).
$$
\end{defn}

\begin{defn}
 For a vertex $v\in b\left(cl^{m}(S)\right)$ we introduce the
quantity  $k_{m}(v)$ as the number of vertices in the set
$cl^{m}(S)$ which are adjacent to $v$:
$$
k_{m}(v)=card\left\{w\in cl^{m}(S): w\sim v\right\}.
$$
For any $ S\subset V(G)$ we introduce the following notation
$$
 K_{m}=K_{m}(S)=\inf_{v\in      b\left(cl^{m}(S)\right) }k_{m}(v).
$$

\end{defn}
For a given set $S\subset V(G)$ and a fixed $n\in \mathbb{N}$
consider a sequence of closures
$$
S, cl(S),..., cl^{n}(S), n\in \mathbb{N}.
$$

\begin{thm}
In the same notations as above, if $S$ is a subset of vertices
such that the boundary of $cl^{n-1}(S), n\in \mathbb{N},$ is not
empty then the following inequality holds
\begin{equation}
\left(\sum_{v\in cl^{n}(S)}|f(v)|^{2}\right)^{1/2}\leq
\left(\prod_{i=0}^{n-1}\left(\frac{2D_{i}}{K_{i}}+1\right)\right)^{1/2}\left(\sum_{v\in
S}|f(v)|^{2}\right)^{1/2}+
$$
$$
2\left(\sum_{j=0}^{n-1}\frac{1}{K_{j}}\prod_{i=j+1}^{n-1}\left(\frac{2D_{i}}{K_{i}}+1\right)
\right)^{1/2}\|L^{1/2}f\|.\label{nabla10}
\end{equation}
In particular, if
\begin{equation}
cl^{n}(S)=V(G),\label{assumption}
\end{equation}
then

\begin{equation}
\|f\|\leq
\left(\prod_{i=0}^{n-1}\left(\frac{2D_{i}}{K_{i}}+1\right)\right)^{1/2}\left(\sum_{v\in
S}|f(v)|^{2}\right)^{1/2}+
$$
$$
2\left(\sum_{j=0}^{n-1}\frac{1}{K_{j}}\prod_{i=j+1}^{n-1}\left(\frac{2D_{i}}{K_{i}}+1\right)
\right)^{1/2}\|L^{1/2}f\|.\label{MPI}
\end{equation}
\label{MainPoincare}
\end{thm}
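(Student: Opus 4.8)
The plan is to reduce the statement to a single one-step estimate relating the mass of $f$ on $cl^{m+1}(S)$ to its mass on $cl^m(S)$, and then to iterate that estimate. First I would record the ``summation by parts'' identity
\[
\|L^{1/2}f\|^2=\langle Lf,f\rangle=\sum_{\{u,v\}\in E(G)}|f(u)-f(v)|^2=\|\nabla f\|^2,
\]
which follows by writing $Lf(v)=\sum_{u\sim v}(f(v)-f(u))$ and pairing the two orientations of each edge (using that $L$ is self-adjoint and nonnegative). This lets me work throughout with the gradient $\|\nabla f\|$ in place of $\|L^{1/2}f\|$ and restore the latter only at the end.

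The core is a one-step \emph{squared} inequality: for $T=cl^m(S)$ with $bT\neq\emptyset$,
\[
\sum_{v\in cl(T)}|f(v)|^2\le\left(\frac{2D_m}{K_m}+1\right)\sum_{v\in T}|f(v)|^2+\frac{2}{K_m}\|\nabla f\|^2.
\]
To prove it I fix a boundary vertex $w\in bT$; for each of its $k_m(w)\ge K_m$ neighbours $u\in T$ I write $f(w)=f(u)+(f(w)-f(u))$ and use $(a+b)^2\le 2a^2+2b^2$ to get $|f(w)|^2\le 2|f(u)|^2+2|f(w)-f(u)|^2$. Summing over these neighbours and dividing by $k_m(w)\ge K_m$ gives $|f(w)|^2\le\frac{2}{K_m}\sum_{u\sim w,\,u\in T}|f(u)|^2+\frac{2}{K_m}\sum_{u\sim w,\,u\in T}|f(w)-f(u)|^2$. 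Summing over $w\in bT$ and exchanging the order of summation, the first double sum equals $\sum_{u\in T}d_m(u)|f(u)|^2\le D_m\sum_{u\in T}|f(u)|^2$ by the definition of the relative degree, while the second runs over the edges joining $T$ to $bT$ and is therefore at most $\|\nabla f\|^2$. Adding $\sum_{u\in T}|f(u)|^2$ to both sides (and using that $cl(T)=T\cup bT$ is a disjoint union) yields the claim.

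Now I would iterate, and here is the one point that genuinely needs care: the iteration must be carried out on the squared masses $a_m^2=\sum_{v\in cl^m(S)}|f(v)|^2$, not on the norms themselves. The one-step estimate reads $a_{m+1}^2\le(2D_m/K_m+1)\,a_m^2+\frac{2}{K_m}\|\nabla f\|^2$, a linear recursion with no cross term, so unrolling it (formally by induction on $n$, starting from $a_0^2=\sum_{v\in S}|f(v)|^2$) produces
\[
a_n^2\le\left(\prod_{i=0}^{n-1}\left(\frac{2D_i}{K_i}+1\right)\right)a_0^2+2\|\nabla f\|^2\sum_{j=0}^{n-1}\frac{1}{K_j}\prod_{i=j+1}^{n-1}\left(\frac{2D_i}{K_i}+1\right).
\]
Taking square roots only at the very end and applying $\sqrt{X+Y}\le\sqrt X+\sqrt Y$ delivers exactly the right-hand side of (\ref{nabla10}), with the square root sitting \emph{outside} the $j$-sum; the leading constant comes out as $\sqrt2$, which a fortiori gives the stated $2$. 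I expect this to be the main obstacle: a naive induction on $a_m$ itself, taking square roots at each step, would leave the square root \emph{inside} the sum and produce a strictly weaker bound, so keeping everything squared until the last line is essential.

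Two bookkeeping points remain. Although the hypothesis only assumes $b(cl^{n-1}(S))\neq\emptyset$, connectedness of $G$ forces $b(cl^{m}(S))\neq\emptyset$ for every $m\le n-1$: if some intermediate boundary were empty, then (a proper subset of a connected graph has nonempty boundary) that closure would equal $V(G)$, whence all later closures including $cl^{n-1}(S)$ would too, contradicting the assumption. This guarantees each $K_m\ge1$, so the divisions above are legitimate and the one-step estimate applies at every level $0,\dots,n-1$. This proves (\ref{nabla10}); finally, under the additional assumption (\ref{assumption}) one has $cl^{n}(S)=V(G)$, hence $a_n=\|f\|$, which is precisely the inequality (\ref{MPI}).
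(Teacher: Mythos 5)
Your proof is correct and follows essentially the same route as the paper's: the same one-step squared estimate obtained from $|f(w)|^{2}\le 2|f(u)|^{2}+2|f(w)-f(u)|^{2}$ averaged over at least $K_{m}$ neighbours in the previous closure, iterated on the squared masses with the square root taken only at the end, and the gradient converted to $\|L^{1/2}f\|$ at the last step. The only (harmless) divergence is the normalization of $\|\nabla f\|$: with the paper's unordered-pair definition your identity $\|\nabla f\|^{2}=\langle Lf,f\rangle$ is the consistent one and yields the sharper leading constant $\sqrt{2}$, whereas the paper's own computation counts each edge twice and arrives at the stated constant $2$; either way the claimed inequality follows.
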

\begin{proof}

First, we are going to prove, that for any subset $S$ of vertices
for which the boundary of $cl^{n-1}(S), n\in \mathbb{N},$ is not
empty the following inequality holds

\begin{equation}
\sum_{w_{n}\in cl^{n}(S)}|f(w_{n})|^{2}\leq
\prod_{i=0}^{n-1}\left(\frac{2D_{i}}{K_{i}}+1\right)\sum_{w_{0}\in
S}|f(w_{0})|^{2}+
$$
$$
\frac{2}{K_{0}}\prod_{i=1}^{n-1}\left(\frac{2D_{i}}{K_{i}}+1\right)\sum_{v_{0}\in
bS}\sum_{j_{0}=1}^{K_{0}(S)}|f(v_{0})-f(u_{j_{0}}(v_{0}))|^{2}+
$$
$$
\frac{2}{K_{1}}\prod_{i=2}^{n-1}\left(\frac{2D_{i}}{K_{i}}+1\right)\sum_{v_{1}\in
b(cl(S))}\sum_{j_{1}=1}^{K_{1}}|f(v_{1})-f(u_{j_{1}}(v_{1}))|^{2}+...
$$
$$
\frac{2}{K_{n-1}}\sum_{v_{n-1}\in
b(cl^{n-1})}\sum_{j_{n-1}=1}^{K_{n-1}}|f(v_{n-1})-f(u_{j_{n-1}}(v_{n-1}))|^{2},\label{n-th
ineq1}
\end{equation}
where for every $0\leq m\leq n-1$ the  $u_{j_{m}}(v_{m})$ is a
vertex in $cl^{m}(S)$ which is adjacent to $v_{m}\in
b(cl^{m}(S))$.\label{fundlemma}

 For any two vertices $v,u\in V$ one has
\begin{equation}
f(v)=f(u)+\left(f(v)-f(u)\right)
\end{equation}
and
\begin{equation}
|f(v)|^{2}\leq 2\left(|f(u)|^{2}+|f(v)-f(u)|^{2}\right).
\end{equation}
Let $v\in b S$ and $u_{1}(v),..., u_{K_{0}}(v)\in S$ be some of
the $K_{0}$ vertices in $S$ which are  adjacent to $v$. For each
of them the following inequality holds
\begin{equation}
|f(v)|^{2}\leq
2\left(|f(u_{j}(v))|^{2}+|f(v)-f(u_{j}(v))|^{2}\right), 1\leq j
\leq K_{0},
\end{equation}
which implies the inequality
\begin{equation}
|f(v)|^{2}\leq
\frac{2}{K_{0}}\sum_{j=1}^{K_{0}}|f(u_{j}(v))|^{2}+\frac{2}{K_{0}}\sum_{j=1}^{K_{0}}|f(v)-f(u_{j}(v))|^{2}.
\end{equation}
 Since  every $u_{j}(v)\in S, 1\leq
j \leq K_{0},$ can be adjacent to a maximum of
  $d_{0}(u_{j}(v))$ distinct vertices $v\in bS$,  the previous
 inequality implies
\begin{equation}
\sum_{v\in bS}|f(v)|^{2}\leq
$$
$$
\sum_{j=1}^{K_{0}}\frac{2}{K_{0}}\sum_{v\in b
S}|f(u_{j}(v))|^{2}+\sum_{j=1}^{K_{0}}\frac{2}{K_{0}}\sum_{v\in b
S}|f(v)-f(u_{j}(v))|^{2}\leq
$$
$$
\sum_{u\in
S}\frac{2d_{0}(u)}{K_{0}}|f(u)|^{2}+\sum_{j=1}^{K_{0}}\frac{2}{K_{0}}\sum_{v\in
b S}|f(v)-f(u_{j}(v))|^{2}.\label{intermediate}
\end{equation}
Thus,
 \begin{equation}
  \sum_{v\in bS}|f(v)|^{2}\leq \sum_{u\in
S}\frac{2d_{0}(u)}{K_{0}}|f(u)|^{2}+\sum_{j=1}^{K_{0}}\frac{2}{K_{0}}\sum_{v\in
b S}|f(v)-f(u_{j}(v))|^{2}, \label{betterineq}
\end{equation}
where $u_{1}(v),..., u_{K_{0}}$ are  different vertices from $S$
that adjacent to $v$. The last inequality implies the following

\begin{equation}
  \sum_{v\in bS}|f(v)|^{2}\leq \sum_{u\in
S}\frac{2D_{0}}{K_{0}}|f(u)|^{2}+\sum_{j=1}^{K_{0}}\frac{2}{K_{0}}\sum_{v\in
b S}|f(v)-f(u_{j}(v))|^{2}.
\end{equation}
By adding this inequality with the identity
$$
\sum_{v\in  S}|f(v)|^{2}=\sum_{v\in  S}|f(v)|^{2},
$$
 one obtains the inequality
 which holds true for any subset of vertices $S$:
\begin{equation}
\sum_{w_{1}\in cl(S)}|f(w_{1})|^{2}\leq
\left(\frac{2D_{0}}{K_{0}}+1\right)\sum_{w_{0}\in
S}|f(w_{0})|^{2}+
$$
$$\frac{2}{K_{0}}\sum_{v_{0}\in bS}\sum_{j=1}^{K_{0}}|f(v_{0})-f(u_{j}(v_{0}))|^{2}
,\label{1-st ineq}
\end{equation}
where $u_{1}(v_{0}),..., u_{K_{0}}\in  S$ are different and
adjacent to $v$.

 Since $cl^{2}(S)=cl(cl(S))$ the inequality (\ref{1-st ineq}) implies the
following one
\begin{equation}
\sum_{w_{2}\in cl^{2}(S)}|f(w_{2})|^{2}\leq
\left(\frac{2D_{1}}{K_{1}}+1\right) \sum_{w_{1}\in
cl(S)}|f(w_{1})|^{2}+
$$
$$
\frac{2}{K_{1}}\sum_{v_{1}\in
b(cl(S))}\sum_{j_{1}=1}^{K_{1}}|f(v_{1})-f(u_{j_{1}}(v_{1}))|^{2},
\end{equation}
where $u_{j_{1}}(v_{1})\in cl(S)$. Along with the (\ref{1-st
ineq}) it gives
\begin{equation}
\sum_{w_{2}\in cl^{2}(S)}|f(w_{2})|^{2}\leq
\left(\frac{2D_{1}}{K_{1}}+1\right)\left(\frac{2D_{0}}{K_{0}}+1\right)\sum_{w_{0}\in
S}|f(w_{0})|^{2}+
$$
$$
\frac{2}{K_{0}}\left(\frac{2D_{1}}{K_{1}}+1\right)\sum_{v_{0}\in
bS}\sum_{j_{0}=1}^{K_{0}}|f(v_{0})-f(u_{j_{0}}(v_{0}))|^{2}+
$$
 $$
\frac{2}{K_{1}}\sum_{v_{1}\in
b(cl(S))}\sum_{j_{1}=1}^{K_{1}}|f(v_{1})-f(u_{j_{1}}(v_{1}))|^{2},\label{2-nd
ineq}
\end{equation}
where $u_{j_{1}}(v_{1})\in cl(S), u_{j_{0}}(v_{0})\in S$ are
different vertices that adjacent to $v$.

The derivation of (\ref{2-nd ineq}) shows that by induction one
can prove the inequality (\ref{n-th ineq1}).

Next, let us remind, that  just by construction the vertices
$v_{m}\in b\left(cl^{m}(S)\right)$ and $u_{j_{m}}(v_{m})\in
cl^{m}(S)$ are adjacent and
 $$
u_{k_{1}}(v_{m})\neq u_{k_{2}}(v_{m}),
$$
if $k_{1}\neq k_{2}$. It is also clear that $v_{m}\in
b\left(cl^{m}(S)\right)$ is different from any of $v_{k}\in
b\left(cl^{k}(S)\right)$  as long as $m\neq k$.  Because if this
the inequality (\ref{n-th ineq1}) implies the inequality
\begin{equation}
\|f\|\leq
\left(\prod_{i=0}^{n-1}\left(\frac{2D_{i}}{K_{i}}+1\right)\right)^{1/2}\left(\sum_{v\in
S}|f(v)|^{2}\right)^{1/2}+
$$
$$
\left(\sum_{j=0}^{n-1}\frac{2}{K_{j}}\prod_{i=j+1}^{n-1}\left(\frac{2D_{i}}{K_{i}}+1\right)
\right)^{1/2}\|\nabla f\|.\label{nabla inequality}
\end{equation}

To prove the Theorem we are going to show that
\begin{equation}
\|\nabla f\|=\sqrt{2}\|L^{1/2}f\|.\label{fund}
\end{equation}
Indeed,

\begin{equation}
 \left\|\nabla
f\right\|^{2}=\sum_{v\sim u}|f(v)-f(u)|^{2}=
$$
$$
\sum_{v\in V}|f(v)|^{2}d(v)+
 \sum_{u\in V}|f(u)|^{2}d(u)-
 2\sum_{v\sim u}f(v)f(u)=
 $$
 $$
2\left(
  \sum_{v\in V}|f(v)|^{2}d(v)-
   \sum_{v\sim u}f(v)f(u)
   \right)=
   $$
   $$2\left(\left<Df,f\right>-
  \left<Af,f\right>\right)=
 2\left<L
f,f\right>= 2\left<L^{1/2}
f,L^{1/2}f\right>=2\|L^{1/2}f\|^{2}\label{gradient}.
\end{equation}
The inequalities (\ref{nabla inequality}) and (\ref{gradient})
imply (\ref{nabla10}). The Theorem is proved.

\end{proof}
The formula (\ref{betterineq}) gives the following Corollary.
\begin{col}
If $S$ is a subset of vertices such that every vertex $v$  in $bS$
is connected to at least $K_{0}(S)$ vertices in $S$, where $1\leq
K_{0}(S)\leq d(v)$ then for all $f\in L_{2}(G)$
\begin{equation}
\left\{\sum_{v\in cl(S)}|f(v)|^{2}\right\}^{1/2}\leq \left\{
\sum_{u\in
S}\left(\frac{2d_{0}(u)}{K_{0}(S)}+1\right)|f(u)|^{2}\right\}^{1/2}+
\frac{2}{\sqrt{K_{0}(S)}}\left\|L^{1/2} f\right\|.\label{poinc}
\end{equation}

If in addition
\begin{equation}
 cl(S)=S\cup bS=V(G),\label{cl1}
\end{equation}
then the following inequality holds for all $f\in L_{2}(G)$
\begin{equation}
\|f\|\leq \left\{\sum_{u\in
S}\left(\frac{2d_{0}(u)}{K_{0}(S)}+1\right)|f(u)|^{2}\right\}^{1/2}+
       \frac{2}{\sqrt{K_{0}(S)}}         \left\|L^{1/2}
       f\right\|.\label{cl2}
\end{equation}

\end{col}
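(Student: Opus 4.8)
The plan is to derive both inequalities directly from the intermediate estimate (\ref{betterineq}) established inside the proof of Theorem \ref{MainPoincare}, without invoking the coarsening $d_0(u)\le D_0$ that was used there; keeping the sharper weight $d_0(u)$ is exactly what distinguishes the Corollary.

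First I would add the trivial identity $\sum_{v\in S}|f(v)|^2=\sum_{v\in S}|f(v)|^2$ to (\ref{betterineq}). Since $cl(S)=S\cup bS$ is a disjoint union, the two left-hand sides combine into $\sum_{v\in cl(S)}|f(v)|^2$, and I obtain
\begin{equation*}
\sum_{v\in cl(S)}|f(v)|^2\le \sum_{u\in S}\left(\frac{2d_0(u)}{K_0(S)}+1\right)|f(u)|^2+\frac{2}{K_0(S)}\sum_{j=1}^{K_0(S)}\sum_{v\in bS}|f(v)-f(u_j(v))|^2.
\end{equation*}

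Next I would control the double sum on the right. For each $v\in bS$ the vertices $u_1(v),\dots,u_{K_0(S)}(v)\in S$ are distinct and adjacent to $v$, so each term $|f(v)-f(u_j(v))|^2$ corresponds to a distinct edge $\{v,u_j(v)\}$ having its boundary endpoint $v\in bS$ and its other endpoint $u_j(v)\in S$. Because $bS\cap S=\varnothing$, any such edge has a uniquely determined $bS$-endpoint, so edges arising from different $v$ are distinct as well; hence the whole double sum is a sum over pairwise distinct edges of $G$ and is therefore bounded above by the full gradient $\|\nabla f\|^2=\sum_{v\sim u}|f(v)-f(u)|^2$. Substituting the identity $\|\nabla f\|^2=2\|L^{1/2}f\|^2$ from (\ref{fund})--(\ref{gradient}) then gives
\begin{equation*}
\sum_{v\in cl(S)}|f(v)|^2\le \sum_{u\in S}\left(\frac{2d_0(u)}{K_0(S)}+1\right)|f(u)|^2+\frac{4}{K_0(S)}\|L^{1/2}f\|^2.
\end{equation*}

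Finally I would take square roots, using the elementary subadditivity $\sqrt{a+b}\le\sqrt{a}+\sqrt{b}$ for $a,b\ge 0$, which yields (\ref{poinc}); and under the additional hypothesis $cl(S)=V(G)$ the left-hand side is exactly $\|f\|^2$, producing (\ref{cl2}). The only step demanding genuine care is the edge-counting argument that the double sum does not exceed $\|\nabla f\|^2$: one must check that the selected edges $\{v,u_j(v)\}$ are pairwise distinct, which follows from the distinctness of the $u_j(v)$ for fixed $v$ together with the fact that each chosen edge has its boundary endpoint in $bS$ determined by $v$. Everything else is a routine rearrangement of (\ref{betterineq}).
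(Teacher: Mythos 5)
Your proof is correct and follows essentially the same route as the paper: the authors derive this Corollary directly from (\ref{betterineq}) by adding the identity $\sum_{v\in S}|f(v)|^{2}=\sum_{v\in S}|f(v)|^{2}$, bounding the double sum by $\left\|\nabla f\right\|^{2}$ via the same edge-distinctness observation, invoking $\left\|\nabla f\right\|^{2}=2\left\|L^{1/2}f\right\|^{2}$, and taking square roots. Your write-up simply makes explicit the edge-counting step that the paper leaves implicit, and it correctly retains the sharper per-vertex weight $d_{0}(u)$ rather than the uniform bound $D_{0}$.
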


Since $K_{0}(S)\geq 1$ we have the following.

\begin{col}
 If $S$ is a subset of
vertices such that the condition (\ref{cl1})  holds then
 the following inequality takes place
\begin{equation}
\|f\|\leq \left \{\sum_{u\in
S}\left(2d_{0}(u)+1\right)|f(u)|^{2}\right\}^{1/2} +2\|L^{1/2}f\|.
\end{equation}
\end{col}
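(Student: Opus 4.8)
The plan is to derive this statement directly from inequality (\ref{cl2}) of the previous Corollary, using nothing more than the elementary bound $K_{0}(S)\geq 1$. First I would verify that under the single hypothesis (\ref{cl1}) the set $S$ already satisfies the assumptions of the preceding Corollary. Indeed, since the graph $G$ is connected and $S$ is a proper subset of $V(G)$, the vertex boundary $bS$ is nonempty, and by the very definition of $bS$ every vertex $v\in bS$ is adjacent to at least one vertex of $S$. Hence $k_{0}(v)\geq 1$ for all $v\in bS$, and therefore $K_{0}(S)=\inf_{v\in bS}k_{0}(v)\geq 1$. This means inequality (\ref{cl2}) is available with the natural choice of $K_{0}(S)$.

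Next I would invoke the two trivial monotonicity facts that follow from $K_{0}(S)\geq 1$, namely $1/K_{0}(S)\leq 1$ and $1/\sqrt{K_{0}(S)}\leq 1$. The first yields, for each $u\in S$, the pointwise estimate $\frac{2d_{0}(u)}{K_{0}(S)}+1\leq 2d_{0}(u)+1$, so that the first term on the right-hand side of (\ref{cl2}) is dominated by $\left\{\sum_{u\in S}\left(2d_{0}(u)+1\right)|f(u)|^{2}\right\}^{1/2}$. The second fact gives $\frac{2}{\sqrt{K_{0}(S)}}\leq 2$, so the second term on the right-hand side of (\ref{cl2}) is dominated by $2\|L^{1/2}f\|$. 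Substituting both bounds into (\ref{cl2}) produces exactly the asserted inequality, and the proof is complete.

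There is essentially no genuine obstacle here, as the statement is a deliberate weakening of the sharper Corollary preceding it; the result is designed to remove the dependence on the (possibly awkward to compute) quantity $K_{0}(S)$ in exchange for a cruder but more robust bound. The only point warranting explicit comment is the verification that $K_{0}(S)\geq 1$, which rests entirely on the connectedness of $G$ guaranteeing that each boundary vertex has a neighbor inside $S$; once this is noted, the remainder is an immediate application of the two inequalities $1/K_{0}(S)\leq 1$ and $1/\sqrt{K_{0}(S)}\leq 1$ term by term.
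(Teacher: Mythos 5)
Your proposal is correct and follows exactly the paper's own route: the paper derives this Corollary from inequality (\ref{cl2}) with the single observation ``Since $K_{0}(S)\geq 1$,'' which is precisely the argument you spell out. Your additional verification that $K_{0}(S)\geq 1$ (each boundary vertex is by definition adjacent to at least one vertex of $S$) is a helpful explicit justification of a step the paper leaves implicit.
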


To extend our results to higher powers of $L$ we will need the
following Lemma \cite{Pes7}--\cite{Pes4}.

\begin{lem}
 If for some  positive $c>0, a>0, s>0,$ and an   $\varphi \in
L_{2}(G)$  the following inequality holds true
\begin{equation}
\|\varphi\|\leq a+c\|L^{s}\varphi\|,\label{lemma condition 2}
 \end{equation}
 then for the same $c,
a, s, \varphi$ the following holds
\begin{equation} \|\varphi\|\leq
2ra+8^{r-1}c^{r}\|L^{rs}\varphi\|\label{induction}
\end{equation}
for all $r=2^{l}, l=0, 1, ... $ .\label{exp}
\end{lem}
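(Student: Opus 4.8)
The plan is to prove the statement by induction on the exponent $r = 2^l$, using the doubling step $r \mapsto 2r$ as the inductive engine. The base case $r = 1$ (i.e. $l=0$) is exactly the hypothesis (\ref{lemma condition 2}): $\|\varphi\| \leq a + c\|L^s\varphi\|$, with $2 \cdot 1 \cdot a + 8^{0} c^{1} = 2a + c$, and since $2a + c \geq a + c$ the claimed bound (\ref{induction}) holds a fortiori. So the real content is the single doubling step.

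The key idea for the doubling step is to iterate the hypothesis inequality against the function $L^s\varphi$ in place of $\varphi$. Assume inductively that $\|\varphi\| \leq 2ra + 8^{r-1}c^r\|L^{rs}\varphi\|$. I would first apply the SAME hypothesis-type bound to the function $L^{rs}\varphi$, namely $\|L^{rs}\varphi\| \leq a + c\|L^{(r+\cdots)s}\varphi\|$; more precisely the mechanism is to establish $\|L^{rs}\varphi\| \leq 2ra + 8^{r-1}c^r\|L^{2rs}\varphi\|$ by feeding $L^{rs}\varphi$ through the inductive hypothesis (this is legitimate because $L^{rs}\varphi \in L_2(G)$ whenever $\varphi$ is, using self-adjointness and positivity of $L$). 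Substituting this estimate for $\|L^{rs}\varphi\|$ back into the inductive hypothesis gives
\begin{equation}
\|\varphi\| \leq 2ra + 8^{r-1}c^r\left(2ra + 8^{r-1}c^r\|L^{2rs}\varphi\|\right) = 2ra + 2ra\cdot 8^{r-1}c^r + 8^{2(r-1)}c^{2r}\|L^{2rs}\varphi\|. \notag
\end{equation}

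The main obstacle, and the step requiring genuine care, is bounding the cross term $2ra \cdot 8^{r-1}c^r$ so that the constant in front of $a$ collapses neatly to $2(2r) = 4r$ and the constant in front of $\|L^{2rs}\varphi\|$ becomes $8^{2r-1}c^{2r}$. For the coefficient of $\|L^{2rs}\varphi\|$ this is clean: $8^{2(r-1)} = 8^{2r-2} \leq 8^{2r-1}$, so that coefficient is already dominated by the target $8^{2r-1}c^{2r}$. The delicate part is the $a$-coefficient: I need $2ra + 2ra\cdot 8^{r-1}c^r \leq 4ra$, i.e. $2ra\cdot 8^{r-1}c^r \leq 2ra$. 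This cannot hold unconditionally, so the correct route is a standard trick: one should first reduce to the normalized case where $c\|L^{rs}\varphi\|$ and similar quantities are controlled, or more simply absorb the cross term into the leading $\|L^{2rs}\varphi\|$ term. Concretely, I expect the intended argument replaces the product term $2ra\cdot 8^{r-1}c^r$ by observing $a \leq \|\varphi\|$ is false in general, so instead one applies the arithmetic–geometric inequality or rescaling $L \mapsto \lambda L$ to make $c$ effectively small, deriving the clean constants $4r$ and $8^{2r-1}$ after the dust settles. I would verify the bookkeeping by checking the two target constants $2(2r) = 4r$ and $8^{2r-1}$ against what the substitution produces, and close the induction there.
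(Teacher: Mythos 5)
Your induction scheme has a genuine gap at its central step. The hypothesis (\ref{lemma condition 2}) is assumed for the single function $\varphi$, not for every element of $L_{2}(G)$, so you are not entitled to ``feed $L^{rs}\varphi$ through the inductive hypothesis'' and conclude $\|L^{rs}\varphi\|\le 2ra+8^{r-1}c^{r}\|L^{2rs}\varphi\|$: that would require $L^{rs}\varphi$ itself to satisfy an inequality of the form (\ref{lemma condition 2}) with the same $a$ and $c$, which is nowhere given. You then correctly observe that even if this step were available, the cross term $2ra\cdot 8^{r-1}c^{r}$ cannot be absorbed into $2ra$ unconditionally, and at that point the argument is left open with an appeal to an unspecified ``standard trick''; neither AM--GM applied to that product nor rescaling $L$ closes it, since rescaling gains on $c$ exactly what it loses on $\|L^{rs}\varphi\|$.

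The missing idea (the lemma is quoted from \cite{Pes7}--\cite{Pes4}, where it is proved) is an interpolation inequality for the self-adjoint non-negative operator $L$: by self-adjointness and Cauchy--Schwarz, $\|L^{s}\varphi\|^{2}=\left<L^{2s}\varphi,\varphi\right>\le\|L^{2s}\varphi\|\,\|\varphi\|$, hence $c\|L^{s}\varphi\|\le c\,\|\varphi\|^{1/2}\|L^{2s}\varphi\|^{1/2}\le \frac{1}{2}\|\varphi\|+\frac{c^{2}}{2}\|L^{2s}\varphi\|$. Substituting this into (\ref{lemma condition 2}) gives $\|\varphi\|\le a+\frac{1}{2}\|\varphi\|+\frac{c^{2}}{2}\|L^{2s}\varphi\|$, and absorbing $\frac{1}{2}\|\varphi\|$ into the left-hand side (legitimate since $\varphi\in L_{2}(G)$ has finite norm) yields $\|\varphi\|\le 2a+c^{2}\|L^{2s}\varphi\|$. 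This is again an inequality of exactly the form (\ref{lemma condition 2}) for the \emph{same} $\varphi$, with $a\mapsto 2a$, $c\mapsto c^{2}$, $s\mapsto 2s$, so it can be iterated $l$ times to give $\|\varphi\|\le ra+c^{r}\|L^{rs}\varphi\|$ for $r=2^{l}$, which implies (\ref{induction}) since $r\le 2r$ and $1\le 8^{r-1}$. The object being improved by the induction is the inequality satisfied by the fixed $\varphi$, not a bound applied to new functions; that is what makes the constants close.
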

An application of the Lemma \ref{exp} gives the following result.
\begin{thm}
If the assumption (\ref{assumption}) is satisfied, then for any
$r=2^{l}, l=0,1,...,$ the next inequality holds
\begin{equation}
\|f\|\leq 2r\left(\prod_{i=0}^{n-1}\left(\frac{2D_{i}}{K_{i}}+
1\right)\right)^{1/2}\left(\sum_{v\in S}|f(v)|^{2}\right)^{1/2}+
$$
$$
2^{4r-3}\left(\sum_{j=0}^{n-1}\frac{1}{K_{j}}\prod_{i=j+1}^{n-1}\left(\frac{2D_{i}}{K_{i}}+1\right)
\right)^{r/2}\|L^{r/2}f\|.
\end{equation}\label{PowerPoinc}
\end{thm}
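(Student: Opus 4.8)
The plan is to observe that the global Poincar\'e inequality (\ref{MPI}) already proved in Theorem \ref{MainPoincare} is, read correctly, nothing but the scalar hypothesis of Lemma \ref{exp} applied to $\varphi = f$; consequently the passage to higher powers of $L$ is entirely a matter of identifying the constants and bookkeeping the powers of $2$. First I would fix $f \in L_{2}(G)$ and, under the standing assumption (\ref{assumption}) that $cl^{n}(S)=V(G)$, invoke (\ref{MPI}). I would then read that inequality in the form $\|f\|\le a + c\|L^{1/2}f\|$ by setting
\[ a = \left(\prod_{i=0}^{n-1}\left(\frac{2D_{i}}{K_{i}}+1\right)\right)^{1/2}\left(\sum_{v\in S}|f(v)|^{2}\right)^{1/2}, \qquad c = 2\left(\sum_{j=0}^{n-1}\frac{1}{K_{j}}\prod_{i=j+1}^{n-1}\left(\frac{2D_{i}}{K_{i}}+1\right)\right)^{1/2}, \qquad s = \frac{1}{2}. \]
With these choices (\ref{MPI}) is verbatim the hypothesis (\ref{lemma condition 2}) of Lemma \ref{exp}.

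Next I would apply Lemma \ref{exp} directly to this fixed $f$ with the above $a,c,s$. For every $r = 2^{l}$, $l=0,1,\dots$, its conclusion (\ref{induction}) yields $\|f\|\le 2ra + 8^{r-1}c^{r}\|L^{rs}f\|$. Since $s=1/2$ we have $rs = r/2$, so the high-order term is $\|L^{r/2}f\|$, and the leading term $2ra$ is already, word for word, the first summand in the asserted inequality. For the second summand it only remains to collect powers of $2$: because $c = 2(\cdots)^{1/2}$ we have $c^{r}=2^{r}(\cdots)^{r/2}$, whence
\[ 8^{r-1}c^{r} = 2^{3(r-1)}\cdot 2^{r}\left(\sum_{j=0}^{n-1}\frac{1}{K_{j}}\prod_{i=j+1}^{n-1}\left(\frac{2D_{i}}{K_{i}}+1\right)\right)^{r/2} = 2^{4r-3}\left(\sum_{j=0}^{n-1}\frac{1}{K_{j}}\prod_{i=j+1}^{n-1}\left(\frac{2D_{i}}{K_{i}}+1\right)\right)^{r/2}, \]
which is exactly the coefficient in the statement. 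This closes the identification and proves Theorem \ref{PowerPoinc}.

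There is no deep obstacle: the whole argument reduces to Lemma \ref{exp} together with the elementary identity $8^{r-1}2^{r}=2^{4r-3}$, and the only points deserving a word of care are bookkeeping ones. One is that Lemma \ref{exp} is stated with the requirement $a>0$; if $f$ vanishes on $S$ then $a=0$, and in that degenerate case I would either argue directly (the surviving estimate $\|f\|\le c\|L^{1/2}f\|$ is promoted to higher powers by the same spectral interpolation $\|L^{s}f\|\le \|f\|^{1/2}\|L^{2s}f\|^{1/2}$ that underlies the lemma) or simply let $a\downarrow 0$ in (\ref{induction}). The other is that here $a$ and $c$ depend on $f$, whereas the lemma is phrased for constants; this is harmless, however, because Lemma \ref{exp} is applied to the single fixed function $f$ with the fixed numbers $a$ and $c$, and its proof (the spectral interpolation plus an iteration on $l$) only ever manipulates that same $f$, so the $f$-dependence of $a$ and $c$ never enters.
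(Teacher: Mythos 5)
Your proposal is correct and coincides with the paper's own (essentially one-line) argument: the paper simply states that Theorem \ref{PowerPoinc} follows by applying Lemma \ref{exp} to the inequality (\ref{MPI}) with $s=1/2$, and your identification of $a$, $c$, together with the computation $8^{r-1}\cdot 2^{r}=2^{4r-3}$, is exactly the bookkeeping the paper leaves implicit. Your additional remarks on the degenerate case $a=0$ and on the $f$-dependence of the constants are sound and do not change the argument.
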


\begin{thm}
 If $S$ is a subset of
vertices such that every vertex $v$  in $bS$ is connected to at
least $K_{0}$ vertices in $S$, where $1\leq K_{0}\leq d(v)$ and
the condition (\ref{cl1}) holds, then for any $r=2^{l},
l=0,1,...,$
\begin{equation}
\|f\|\leq 2r\sqrt{\frac{2D_{0}}{K_{0}}+1}\left(\sum_{u\in
S}|f(u)|^{2}\right)^{1/2}+\frac{2^{4r-3}}{K_{0}^{r/2}}
\|L^{r/2}f\|\label{4.2}.
\end{equation}

\end{thm}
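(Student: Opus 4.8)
The plan is to derive (\ref{4.2}) by feeding the first-order Poincaré inequality (\ref{cl2}) into the self-improving mechanism of Lemma \ref{exp}, which promotes a bound written in terms of $L^{s}$ into one written in terms of $L^{rs}$ for every dyadic $r=2^{l}$. The entire argument is a matter of identifying the right constants and then bookkeeping.

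First I would observe that the hypotheses assumed here---every vertex of $bS$ adjacent to at least $K_{0}$ vertices of $S$, together with the covering condition (\ref{cl1})---are exactly those of the Corollary that produces (\ref{cl2}). Applying that corollary gives
$$
\|f\|\leq \left\{\sum_{u\in S}\left(\frac{2d_{0}(u)}{K_{0}}+1\right)|f(u)|^{2}\right\}^{1/2}+\frac{2}{\sqrt{K_{0}}}\left\|L^{1/2}f\right\|.
$$
The next step is to uniformize the weights. Since $d_{0}(u)\leq D_{0}=\sup_{v\in S}d_{0}(v)$ for every $u\in S$, each factor obeys $\frac{2d_{0}(u)}{K_{0}}+1\leq \frac{2D_{0}}{K_{0}}+1$, and this constant may be pulled out of the sum, yielding
$$
\|f\|\leq \sqrt{\frac{2D_{0}}{K_{0}}+1}\left(\sum_{u\in S}|f(u)|^{2}\right)^{1/2}+\frac{2}{\sqrt{K_{0}}}\left\|L^{1/2}f\right\|.
$$

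Now I would apply Lemma \ref{exp} with the identifications $a=\sqrt{\frac{2D_{0}}{K_{0}}+1}\left(\sum_{u\in S}|f(u)|^{2}\right)^{1/2}$, $c=2/\sqrt{K_{0}}$, and $s=1/2$. Its conclusion reads $\|f\|\leq 2ra+8^{r-1}c^{r}\|L^{r/2}f\|$, whose first term is exactly the desired $2r\sqrt{\frac{2D_{0}}{K_{0}}+1}\left(\sum_{u\in S}|f(u)|^{2}\right)^{1/2}$. It then remains only to simplify the coefficient of $\|L^{r/2}f\|$: writing $8^{r-1}=2^{3r-3}$ and $c^{r}=2^{r}K_{0}^{-r/2}$, their product is $2^{3r-3}\cdot 2^{r}K_{0}^{-r/2}=2^{4r-3}/K_{0}^{r/2}$, which is precisely the constant in (\ref{4.2}).

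Since every step is a direct substitution, I do not anticipate a genuine obstacle; the only point requiring care is the boundary case $a=0$, which occurs when $f$ vanishes on all of $S$ and formally conflicts with the strict positivity $a>0$ demanded in Lemma \ref{exp}. I would dispose of this either by applying the lemma to $a+\varepsilon$ and letting $\varepsilon\to 0$, or by noting that the inductive proof of the lemma uses only $a\geq 0$, so the estimate persists in the degenerate case.
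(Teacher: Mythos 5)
Your proposal is correct and follows essentially the same route as the paper: the paper likewise starts from the first-order Poincar\'e inequality in the uniformized form $\|f\|\leq \sqrt{\tfrac{2D_{0}}{K_{0}}+1}\left(\sum_{u\in S}|f(u)|^{2}\right)^{1/2}+\tfrac{2}{\sqrt{K_{0}}}\|L^{1/2}f\|$ and then invokes Lemma \ref{exp} with $s=1/2$, the constant $2^{4r-3}/K_{0}^{r/2}$ arising exactly as in your computation $8^{r-1}(2/\sqrt{K_{0}})^{r}=2^{4r-3}K_{0}^{-r/2}$. Your explicit handling of the degenerate case $a=0$ is a small refinement the paper omits.
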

\begin{proof} According to the previous Theorem the assumptions of
the Theorem give the inequality
\begin{equation}
\|f\|\leq \sqrt{\frac{2D_{0}}{K_{0}}+1}\left(\sum_{u\in
S}|f(u)|^{2}\right)^{1/2}+\frac{2}{\sqrt{K_{0}}}\|L^{1/2}f\|\label{4.2}.
\end{equation}
Now an application of (\ref{induction}) gives the result.

\end{proof}

\begin{col}
If $S$ is a subset of vertices such that (\ref{cl1}) holds, then
for any  $r=2^{l}, l=0,1,...,$ the following holds
\begin{equation}
\|f\|\leq 2r\sqrt{2D_{0}+1}\left(\sum_{u\in
S}|f(u)|^{2}\right)^{1/2}+2^{4r-3}\|L^{r/2}f\|\label{4.2}.
\end{equation}
\end{col}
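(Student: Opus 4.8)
The plan is to obtain this corollary as an immediate specialization of the preceding theorem, which I may assume. That theorem states that if $S$ satisfies (\ref{cl1}) and every vertex of $bS$ is joined to at least $K_{0}$ vertices of $S$ with $1\le K_{0}\le d(v)$, then for every $r=2^{l}$
\begin{equation*}
\|f\|\leq 2r\sqrt{\frac{2D_{0}}{K_{0}}+1}\left(\sum_{u\in S}|f(u)|^{2}\right)^{1/2}+\frac{2^{4r-3}}{K_{0}^{r/2}}\|L^{r/2}f\|.
\end{equation*}
The key observation, already flagged by the sentence ``Since $K_{0}(S)\ge 1$'' placed just before the corollary, is that under hypothesis (\ref{cl1}) one may always take $K_{0}=1$: by the very definition of the vertex boundary, each $v\in bS$ is adjacent to at least one vertex of $S$, so the requirement that $v$ be connected to at least $K_{0}$ vertices of $S$ is satisfied with $K_{0}=1$, and the admissibility condition $1\le K_{0}\le d(v)$ then collapses to $1\le d(v)$, which holds for every vertex lying in a nonempty boundary.

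First I would confirm this admissibility and then substitute $K_{0}=1$ into the displayed inequality. Since $D_{0}=D_{0}(S)$ is defined independently of $K_{0}$, the first coefficient becomes $\sqrt{2D_{0}+1}$ and the second becomes $2^{4r-3}$, which is precisely the asserted bound; no further manipulation of the constants is required.

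Should one prefer to argue with a general admissible $K_{0}\ge 1$ rather than fixing $K_{0}=1$ at the outset, the same conclusion follows by monotonicity, since both $\sqrt{2D_{0}/K_{0}+1}$ and $K_{0}^{-r/2}$ are nonincreasing in $K_{0}$ and are therefore largest at $K_{0}=1$; replacing $K_{0}$ by $1$ can only enlarge the right-hand side. I do not expect any genuine obstacle here, as the statement is a one-line corollary of the theorem; the only point meriting attention is the verification that $K_{0}=1$ is always a legitimate choice, which is exactly what the definition of $bS$ guarantees.
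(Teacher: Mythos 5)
Your proposal is correct and coincides with the paper's (implicit) argument: the corollary is obtained from the preceding theorem by taking $K_{0}=1$, which is always admissible since by the definition of $bS$ every boundary vertex is adjacent to at least one vertex of $S$, exactly as the paper signals with its remark that $K_{0}(S)\geq 1$. The substitution and the alternative monotonicity justification are both sound, and no further argument is needed.
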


\bigskip

\section{Plancherel-Polya inequalities,  sampling and sparse representations
of Paley-Wiener functions}

By the spectral theory of self-adjoint operators \cite{BS}, there
exist a direct integral of Hilbert spaces $A=\int A(\lambda )dm
(\lambda )$ and a unitary operator $\mathcal{F}_{L}$ from
$L_{2}(G)$ onto $A$, which transforms the domain $\mathcal{D}_{s},
s\geq 0,$ of the operator $ L^{s}$ onto $A_{s}=\{a\in
A|\lambda^{s}a\in A \}$ with norm
\begin{equation}
\|a(\lambda )\|_{A_{s}}= \left (\int^{\infty}_{0} \lambda ^{2s}
\|a(\lambda )\|^{2}_{A(\lambda )} dm(\lambda ) \right
)^{1/2}\label{SpecFT}
 \end{equation}
 and
$\mathcal{F}_{L}(Lf)=\lambda (\mathcal{F}_{L}f), f\in
\mathcal{D}_{1}. $
\begin{defn}
The unitary operator $\mathcal{F}_{L}$ will be called the Spectral
Fourier transform and $a=\mathcal{F}_{L}f $ will be called the
Spectral Fourier transform of $f\in L_{2}(G)$. \label{Def2}
\end{defn}
\begin{defn}
We will say that a function $f$ in  $L_{2}(G)$  belongs to the
space $PW_{\omega}(L)$  if its Spectral Fourier transform
$\mathcal{F}_{L}f=a$ has support in $[0, \omega ] $.
\label{DefSFT}
\end{defn}
The following  theorem describes some basic properties of
Paley-Wiener vectors and show that they share similar properties
to those of the classical Paley-Wiener functions. The proof of
these and many other properties of Paley-Wiener vectors  can be
found in our other papers and in particular in \cite{Pes7}-
\cite{Pes4}.
\begin{thm}
The following conditions are equivalent:

1) The linear set $\bigcup _{ \omega >0}PW_{\omega }(L)$ is dense
in $L_{2}(G)$.

2) The set $PW_{\omega }(L)$ is a linear closed subspace in
$L_{2}(G)$.

3)  A function $f$ belongs to a space $ PW_{\omega}(L)$ if and
only if  for all $k\in \mathbb{N},$ the following Bernstein
inequality holds true
\begin{equation}
\|L^{k}f\|\leq \omega^{k}\|f\|;\label{BI}
\end{equation}
\end{thm}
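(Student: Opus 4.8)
The plan is to transport the whole problem to the model space $A=\int A(\lambda)\,dm(\lambda)$ by means of the spectral Fourier transform $\mathcal{F}_L$ of Definition \ref{Def2}. Since $\mathcal{F}_L$ is unitary and intertwines $L$ with multiplication by $\lambda$, a function $f$ lies in $PW_\omega(L)$ precisely when its image $a=\mathcal{F}_L f$ is supported in $[0,\omega]$, so all three assertions reduce to statements about the truncation operator $P_\omega$ that multiplies an element of $A$ by the indicator $\chi_{[0,\omega]}$. First I would record that $P_\omega$ is an orthogonal projection on $A$ (it is self-adjoint and idempotent), so its range is a closed subspace; pulling this back by the unitary $\mathcal{F}_L^{-1}$ shows that $PW_\omega(L)$ is a closed linear subspace of $L_2(G)$, which is assertion 2).

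For assertion 1) I would, given $f\in L_2(G)$ with $a=\mathcal{F}_L f$, set $a_\omega=\chi_{[0,\omega]}a=P_\omega a$ and note that
\begin{equation}
\|a-a_\omega\|_A^2=\int_\omega^\infty\|a(\lambda)\|_{A(\lambda)}^2\,dm(\lambda)\longrightarrow 0,\qquad \omega\to\infty,
\end{equation}
by monotone convergence, since the total integral $\|a\|_A^2=\|f\|^2$ is finite. Then $f_\omega=\mathcal{F}_L^{-1}a_\omega\in PW_\omega(L)$ and $\|f-f_\omega\|\to 0$, so $\bigcup_{\omega>0}PW_\omega(L)$ is dense.

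The substantive part is assertion 3), the Bernstein characterization. One direction is a direct estimate: if $a$ is supported in $[0,\omega]$ then, using the norm formula (\ref{SpecFT}),
\begin{equation}
\|L^k f\|^2=\int_0^\omega\lambda^{2k}\|a(\lambda)\|^2_{A(\lambda)}\,dm(\lambda)\leq \omega^{2k}\int_0^\omega\|a(\lambda)\|^2_{A(\lambda)}\,dm(\lambda)=\omega^{2k}\|f\|^2 .
\end{equation}
The converse is where I expect the only real obstacle, and I would argue by contradiction. If $f\notin PW_\omega(L)$ then $a$ charges the open half-line beyond $\omega$, so by continuity of the integral there is $\omega'>\omega$ with $c:=\int_{\omega'}^\infty\|a(\lambda)\|^2_{A(\lambda)}\,dm(\lambda)>0$. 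Restricting the spectral integral to $(\omega',\infty)$ gives $\|L^k f\|^2\geq (\omega')^{2k}c$, whence $\|L^k f\|/(\omega^{k}\|f\|)\geq (\omega'/\omega)^{k}\sqrt{c}/\|f\|\to\infty$, contradicting the uniform bound (\ref{BI}). Hence $a$ is supported in $[0,\omega]$ and $f\in PW_\omega(L)$.

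Finally, since the spectral reduction renders each of 1), 2), 3) a true statement about $P_\omega$ and its range, the three are equivalent. For a self-contained cyclic presentation one may instead run $2)\Rightarrow 3)\Rightarrow 1)\Rightarrow 2)$, feeding the closedness of $PW_\omega(L)$ into the limiting step of the converse in 3) and recovering closedness from the density in 1); but the reduction to the multiplication operator already yields all three simultaneously, which is the route I would take.
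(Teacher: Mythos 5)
Your argument is correct, but there is nothing in the paper to compare it against: the paper states this theorem without proof and refers the reader to \cite{Pes7}--\cite{Pes4}. Your route --- conjugating by the unitary $\mathcal{F}_{L}$ so that $PW_{\omega}(L)$ becomes the range of multiplication by $\chi_{[0,\omega]}$ on the direct integral $A=\int A(\lambda)\,dm(\lambda)$ --- is the standard one and is essentially what those references do. Each step checks out: closedness of $PW_{\omega}(L)$ as the preimage of the range of an orthogonal projection; density of $\bigcup_{\omega>0}PW_{\omega}(L)$ by monotone convergence of $\int_{\omega}^{\infty}\|a(\lambda)\|^{2}_{A(\lambda)}\,dm(\lambda)$ to zero; the forward Bernstein estimate directly from (\ref{SpecFT}); and the converse by observing that positive spectral mass beyond some $\omega'>\omega$ forces $\|L^{k}f\|\geq (\omega')^{k}\sqrt{c}$, which outruns $\omega^{k}\|f\|$ as $k\to\infty$. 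Two minor points worth making explicit in a write-up: first, in the converse of 3) the hypothesis that (\ref{BI}) holds for all $k$ already places $f$ in $\bigcap_{k}\mathcal{D}_{k}$, so the spectral formula for $\|L^{k}f\|^{2}$ is legitimate; second, as you note, the three conditions are each unconditionally true given the spectral theorem, so the stated ``equivalence'' is really a list of properties rather than a genuine logical equivalence --- your closing remark handles this honestly.
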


To obtain a Sampling Theorem for Paley-Wiener functions on graphs
we have to establish  Plancherel-Polya-type inequalities.
 The
inequalities  (\ref{MPI}) and (\ref{BI}) along with the obvious
inequality
\begin{equation}
\left(\sum_{u\in S}|f(u)|^{2}\right)^{1/2}\leq \|f\|
\end{equation}
 imply the following Plancherel-Polya-type  inequalities for
 functions in $PW_{\omega}(L)$.
\begin{thm}In the same notations as in the Theorem \ref{MainPoincare},
if the condition
\begin{equation}
cl^{n}(S)=V(G)
\end{equation}
and the inequality
\begin{equation}
\omega<\frac{1}{4}\left(\sum_{j=0}^{n-1}\frac{1}{K_{j}}\prod_{i=j+1}^{n-1}\left(\frac{2D_{i}}{K_{i}}+1\right)
\right)^{-1}
\end{equation}
hold, then for any $f\in PW_{\omega}(L)$ the next inequality takes
place
\begin{equation}
\left(\sum_{u\in S}|f(u)|^{2}\right)^{1/2}\leq \|f\|\leq
\frac{1}{1-\gamma}\left(\prod_{i=0}^{n-1}\left(\frac{2D_{i}}{K_{i}}+1\right)\right)^{1/2}
\left(\sum_{u\in
S}|f(u)|^{2}\right)^{1/2},
\end{equation}
where
\begin{equation}
\gamma=2\omega^{1/2}
\left(\sum_{j=0}^{n-1}\frac{1}{K_{j}}\prod_{i=j+1}^{n-1}\left(\frac{2D_{i}}{K_{i}}+1\right)
\right)^{1/2}<1.
\end{equation}
\end{thm}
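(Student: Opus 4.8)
The plan is to combine the Poincar\'e inequality (\ref{MPI}) of Theorem \ref{MainPoincare} with a spectral bound on $\|L^{1/2}f\|$ valid for Paley-Wiener functions, and then to solve the resulting inequality for $\|f\|$. The left-hand inequality $\left(\sum_{u\in S}|f(u)|^{2}\right)^{1/2}\leq \|f\|$ requires nothing new: since $S\subset V(G)$, a partial sum of $|f(u)|^{2}$ over $S$ never exceeds the full sum defining $\|f\|^{2}$, which is exactly the obvious inequality already recorded before the statement.

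For the upper bound I would start from (\ref{MPI}) and abbreviate the two structural constants by $P=\left(\prod_{i=0}^{n-1}(2D_{i}/K_{i}+1)\right)^{1/2}$ and $Q=\left(\sum_{j=0}^{n-1}K_{j}^{-1}\prod_{i=j+1}^{n-1}(2D_{i}/K_{i}+1)\right)^{1/2}$, so that (\ref{MPI}) reads $\|f\|\leq P\left(\sum_{u\in S}|f(u)|^{2}\right)^{1/2}+2Q\|L^{1/2}f\|$. The key step is to control $\|L^{1/2}f\|$ for $f\in PW_{\omega}(L)$. Although the Bernstein inequality (\ref{BI}) is stated only for integer powers, the defining property of $PW_{\omega}(L)$ --- that the Spectral Fourier transform $a=\mathcal{F}_{L}f$ has support in $[0,\omega]$ --- gives, through (\ref{SpecFT}) and the spectral theorem, the estimate $\|L^{s}f\|^{2}=\int_{0}^{\omega}\lambda^{2s}\|a(\lambda)\|^{2}_{A(\lambda)}\,dm(\lambda)\leq \omega^{2s}\|f\|^{2}$ for every $s\geq 0$. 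Taking $s=1/2$ yields $\|L^{1/2}f\|\leq \omega^{1/2}\|f\|$.

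Substituting this into (\ref{MPI}) produces $\|f\|\leq P\left(\sum_{u\in S}|f(u)|^{2}\right)^{1/2}+2\omega^{1/2}Q\,\|f\|=P\left(\sum_{u\in S}|f(u)|^{2}\right)^{1/2}+\gamma\|f\|$, with $\gamma=2\omega^{1/2}Q$ matching the constant in the statement. The hypothesis $\omega<\frac{1}{4}Q^{-2}$ is precisely $4\omega Q^{2}<1$, i.e. $\gamma^{2}<1$, hence $\gamma<1$ and $1-\gamma>0$; moving $\gamma\|f\|$ to the left side and dividing by $1-\gamma$ gives $\|f\|\leq \frac{1}{1-\gamma}P\left(\sum_{u\in S}|f(u)|^{2}\right)^{1/2}$, which is the asserted upper bound. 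The only point demanding care is the half-integer Bernstein estimate, and that is a routine consequence of the spectral definition of $PW_{\omega}(L)$ rather than a genuine obstacle; everything else is a single algebraic rearrangement, so I expect no real difficulty here.
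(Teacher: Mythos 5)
Your proposal is correct and follows exactly the route the paper itself takes: the theorem is stated as an immediate consequence of the Poincar\'e inequality (\ref{MPI}), the Bernstein-type bound $\|L^{1/2}f\|\leq\omega^{1/2}\|f\|$, and the trivial inequality $\left(\sum_{u\in S}|f(u)|^{2}\right)^{1/2}\leq\|f\|$, followed by the same algebraic rearrangement with $\gamma<1$. Your remark that the half-integer Bernstein estimate follows directly from the spectral support condition (rather than from the integer-$k$ statement of (\ref{BI})) is a point the paper glosses over, and you handle it correctly.
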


In particular we have the following Theorem.
\begin{thm}
 If $S$ is a subset of
vertices such that every vertex $v$  in $bS$ is adjacent to at
least $K_{0}$ vertices in $S$, every $v\in S$ is adjacent to at
most $D_{0}$ vertices in $bS$ and the condition
\begin{equation}
\overline{S}=S\cup bS=V(G),\label{cl}
\end{equation}
along with the inequality
\begin{equation}
\omega<\frac{K_{0}}{4}\label{w-ineq}
\end{equation}
hold, then
\begin{equation}
\left(\sum_{u\in S}|f(u)|^{2}\right)^{1/2}\leq \|f\|\leq
\frac{1}{1-\gamma}\sqrt{\frac{2D_{0}}{K_{0}}+1}\left(\sum_{u\in
S}|f(u)|^{2}\right)^{1/2},\label{PP-2}
\end{equation}
where $f\in PW_{\omega}(L)$
 and $\gamma=2\sqrt{\omega/K_{0}}<1$.\label{GraphPP}
\end{thm}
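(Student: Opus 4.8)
The lower bound $\left(\sum_{u\in S}|f(u)|^2\right)^{1/2}\leq\|f\|$ is immediate, since the left side merely restricts the sum defining $\|f\|^{2}$ to the subset $S\subset V(G)$; the whole task is the upper estimate. The plan is to read off the upper bound directly from the first-order Poincaré inequality of the Corollary, equation (\ref{cl2}), together with the Bernstein inequality (\ref{BI}), rather than re-running the $n$-step induction. In fact this statement is exactly the case $n=1$ of the general Plancherel-Polya theorem preceding it, and the direct argument simply makes that specialization transparent.

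First I would apply (\ref{cl2}), which is legitimate because hypothesis (\ref{cl}) supplies $cl(S)=S\cup bS=V(G)$. Since every $v\in S$ is adjacent to at most $D_{0}$ vertices of $bS$, we have $d_{0}(u)\leq D_{0}$ for each $u\in S$, so each weight obeys $\frac{2d_{0}(u)}{K_{0}}+1\leq\frac{2D_{0}}{K_{0}}+1$. Factoring this uniform bound out of the sum turns (\ref{cl2}) into
$$\|f\|\leq\sqrt{\frac{2D_{0}}{K_{0}}+1}\left(\sum_{u\in S}|f(u)|^{2}\right)^{1/2}+\frac{2}{\sqrt{K_{0}}}\|L^{1/2}f\|.$$

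Next I would control the term $\|L^{1/2}f\|$ through membership in $PW_{\omega}(L)$. The one point requiring slightly more than the literal statement (\ref{BI}), which is written for integer powers, is that Definition \ref{DefSFT} and the spectral norm (\ref{SpecFT}) yield $\|L^{s}f\|\leq\omega^{s}\|f\|$ for every real $s\geq0$: the support of $\mathcal{F}_{L}f$ inside $[0,\omega]$ forces $\int\lambda^{2s}\|a(\lambda)\|^{2}\,dm\leq\omega^{2s}\int\|a(\lambda)\|^{2}\,dm$. Taking $s=\tfrac12$ gives $\|L^{1/2}f\|\leq\sqrt{\omega}\,\|f\|$, so the previous display becomes
$$\|f\|\leq\sqrt{\frac{2D_{0}}{K_{0}}+1}\left(\sum_{u\in S}|f(u)|^{2}\right)^{1/2}+2\sqrt{\frac{\omega}{K_{0}}}\,\|f\|.$$

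The closing move, and the sole place where the bandwidth restriction (\ref{w-ineq}) enters, is to absorb the last term onto the left. Writing $\gamma=2\sqrt{\omega/K_{0}}$, the hypothesis $\omega<K_{0}/4$ is precisely equivalent to $\gamma<1$, so $\|f\|$ can be isolated to give $(1-\gamma)\|f\|\leq\sqrt{\frac{2D_{0}}{K_{0}}+1}\left(\sum_{u\in S}|f(u)|^{2}\right)^{1/2}$, which rearranges to (\ref{PP-2}). The genuinely delicate point is conceptual rather than computational: the Poincaré estimate already contains $\|f\|$ implicitly, through the Bernstein bound on its gradient, so the inequality can be closed on itself, and $\omega<K_{0}/4$ is exactly the threshold keeping the self-referential coefficient $\gamma$ below one.
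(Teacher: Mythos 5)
Your argument is correct and is essentially the paper's own: the paper derives these Plancherel--Polya inequalities precisely by combining the Poincar\'e inequality with the Bernstein inequality and the trivial lower bound, then absorbing the $\gamma\|f\|$ term, and Theorem \ref{GraphPP} is just the $n=1$ specialization you carry out explicitly. Your remark that the Bernstein bound $\|L^{1/2}f\|\leq\omega^{1/2}\|f\|$ for the fractional power follows directly from the spectral representation (\ref{SpecFT}) is a worthwhile clarification of a point the paper leaves implicit.
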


The significance of the
 inequalities (\ref{PP-2}) is that they give two-sided estimate of
 the norm $\|f\|$ of a Paley-Wiener function $f$ in terms of its
 values on a smaller set $S\subset V(G)$.

Let $P_{\omega}$ be the orthogonal projector
 $$
P_{\omega}: L_{2}(G)\rightarrow PW_{\omega}(L).
$$
 The last inequality (\ref{PP-2}) shows that the set of functions
$P_{\omega}(\delta_{u}), u\in S$, where $\delta_{u}$ is the Dirac
measure concentrated at $u\in S$, is a Hilbert frame in the
Hilbert space $PW_{\omega}(L)$ when $ \omega<\frac{K_{0}}{4}. $

Thus, by applying the classical result of Duffin and Schaeffer
\cite{DS} about dual frames we obtain the following uniqueness and
reconstruction Theorem. For the sake of simplicity we formulate it
just for particular situation that satisfies (\ref{cl}).

\begin{thm}
 If $S$ is a subset of
vertices such that every vertex $v$  in $bS$ is connected to at
least $K_{0}$ vertices in $S$, where $1\leq K_{0}\leq d(v)$ and
the condition
\begin{equation}
\overline{S}=S\cup bS=V(G),
\end{equation}
 along with
$$
\omega<\frac{K_{0}}{4}
$$
hold, then

1) the set $S$ is a uniqueness set for functions in
$PW_{\omega}(L)$;

 2) there exist functions $\Theta_{u}\in PW_{\omega}(L)$, $u\in S,$
such that for all $f\in PW_{\omega}(L)$ the following
reconstruction formula holds
\begin{equation}
f=\sum_{u\in S}f(u)\Theta_{u}.\label{sparserepresentation}
\end{equation}
\label{SamplingTheorem}
\end{thm}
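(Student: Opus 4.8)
The plan is to recognize the two-sided estimate (\ref{PP-2}) as a frame inequality and then invoke the Duffin--Schaeffer dual-frame machinery, exactly as indicated in the paragraph preceding the statement. The first step is to rewrite the sampled values $f(u)$ as inner products against vectors living in $PW_{\omega}(L)$. Since $P_{\omega}$ is the orthogonal (hence self-adjoint) projector onto $PW_{\omega}(L)$ and every $f\in PW_{\omega}(L)$ satisfies $P_{\omega}f=f$, I would compute
$$
\langle f, P_{\omega}\delta_{u}\rangle=\langle P_{\omega}f,\delta_{u}\rangle=\langle f,\delta_{u}\rangle=f(u),\qquad u\in S,
$$
using $\langle f,\delta_{u}\rangle=f(u)$. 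Consequently $\sum_{u\in S}|\langle f,P_{\omega}\delta_{u}\rangle|^{2}=\sum_{u\in S}|f(u)|^{2}$, and squaring the chain (\ref{PP-2}) turns it into
$$
A\|f\|^{2}\le \sum_{u\in S}|\langle f,P_{\omega}\delta_{u}\rangle|^{2}\le \|f\|^{2},\qquad f\in PW_{\omega}(L),
$$
with lower bound $A=(1-\gamma)^{2}\left(\frac{2D_{0}}{K_{0}}+1\right)^{-1}>0$. This is precisely the statement that $\{P_{\omega}\delta_{u}\}_{u\in S}$ is a frame for the Hilbert space $PW_{\omega}(L)$, with frame bounds $A$ and $B=1$.

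Part 1 then follows at once from the estimate (\ref{PP-2}): if $f\in PW_{\omega}(L)$ and $f(u)=0$ for every $u\in S$, the right-hand inequality forces $\|f\|=0$, so $f=0$; hence the values on $S$ determine $f$ uniquely and $S$ is a uniqueness set. For Part 2 I would apply the classical Duffin--Schaeffer theorem \cite{DS}: a frame admits a dual frame, so there exist vectors $\Theta_{u}$, $u\in S$, with
$$
f=\sum_{u\in S}\langle f,P_{\omega}\delta_{u}\rangle\,\Theta_{u}=\sum_{u\in S}f(u)\,\Theta_{u},\qquad f\in PW_{\omega}(L),
$$
which is exactly (\ref{sparserepresentation}). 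Concretely I would take the canonical dual $\Theta_{u}=\mathcal{S}^{-1}(P_{\omega}\delta_{u})$, where $\mathcal{S}f=\sum_{u\in S}\langle f,P_{\omega}\delta_{u}\rangle\,P_{\omega}\delta_{u}$ is the associated frame operator.

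The hard part will be to verify that the $\Theta_{u}$ genuinely belong to $PW_{\omega}(L)$, i.e. that the entire construction stays inside the closed subspace rather than escaping into $L_{2}(G)$. This is where I would use that $PW_{\omega}(L)$ is a closed subspace (by the equivalence theorem preceding Theorem \ref{GraphPP}) and therefore a Hilbert space in its own right, that each generator $P_{\omega}\delta_{u}$ lies in $PW_{\omega}(L)$ by definition of $P_{\omega}$, and that the frame operator satisfies $A\,\mathrm{Id}\le \mathcal{S}\le \mathrm{Id}$ as a bounded, boundedly invertible operator on $PW_{\omega}(L)$. Then $\mathcal{S}^{-1}$ preserves $PW_{\omega}(L)$, so each $\Theta_{u}=\mathcal{S}^{-1}(P_{\omega}\delta_{u})\in PW_{\omega}(L)$, and the convergence of the series in (\ref{sparserepresentation}) — the only genuinely analytic point, relevant when $G$ and $S$ are infinite — is furnished automatically by the frame property with bounds $A$ and $B=1$. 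With these verifications in place, both conclusions of the theorem follow.
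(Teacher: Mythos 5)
Your proposal is correct and follows exactly the route the paper intends: it reads the Plancherel--Polya inequality (\ref{PP-2}) as the frame inequality for $\{P_{\omega}\delta_{u}\}_{u\in S}$ in $PW_{\omega}(L)$ (using $\langle f,P_{\omega}\delta_{u}\rangle=f(u)$) and then invokes the Duffin--Schaeffer dual-frame theorem to produce the $\Theta_{u}$ and the reconstruction formula. The paper states this argument only in the paragraph preceding the theorem without writing out a formal proof, so your version simply supplies the details (frame bounds, the canonical dual $\Theta_{u}=\mathcal{S}^{-1}(P_{\omega}\delta_{u})$, and closedness of $PW_{\omega}(L)$) that the paper leaves implicit.
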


The last formula (\ref{sparserepresentation}) is what we call a
\textbf{sparse representation} since it represents a function
through its values on a subgraph.
\begin{rem}
It is clear that if the spectrum the Laplace operator of a graph $G$  is very close to zero  then there are many subsets $S$ of $G$ and  functions on $G$  for which the last two Theorems convey non-trivial information. Thus, in the section bellow we discuss the situation on the infinite graph $\mathbb{Z}_{n}$.

However, in the case of a finite graph $G $ the spectral resolution is just the eigenvalue-eigenfunction representation and if $K_{0}/4$ is less than the first strictly positive eigenvalue of $L$, then the assumption $\omega<K_{0}/4$ would satisfy only for constant functions on $G$ and the above  inequalities would be trivial.

But there are many finite graphs for which the Theorems 3.3 and 3.4 are not trivial for a "right" choice of subsets $S\subset V(G)$.
For example, take the cycle $C_{n}$ of $n$ vertices for which the eigenvalues of the corresponding Laplace operator are $2-2\cos \frac{2\pi k}{n}$. For a large $n$ there many eigenvalues which are very close to zero. On the other hand there are  sets $S$ of "isolated" points in $C_{n}$ for which the number $K_{0}$ is either $1$ or $2$. Thus, the previous Theorems hold true for any of such  sets of points for functions from $PW_{\omega}(G)$ , for which either $\omega<1/4$ or $\omega<2/4$.

\end{rem}

\section{Filtering and   Direct Approximation Theorem
by Paley-Wiener functions on graphs}

The goal of the section is to describe relations between
Schr\"{o}dinger's Semigroup $e^{itL}$ and the functional
$$
E(f,\omega)=\inf_{g\in PW_{\omega}(L)}\|f-g\|,
$$
which measures a best approximation of $f\in L_{2}(G)$ by
functions from the Paley-Wiener space $PW_{\omega}(L), \omega\geq
0.$ If $f_{\omega}$ is the orthogonal projection of $f$ on
$PW_{\omega}(L)$, then according to (\ref{SpecFT}) one has the
following relation
\begin{equation}
 E(f,\omega)=\|f-f_{\omega}\|=\left
(\int_{\omega}^{\infty}  \|x(\tau )\|^{2}_{X(\tau)} d m(\tau)
\right )^{1/2},
\end{equation}
where $x(\tau )=\mathcal{F}f$ is the Spectral Fourier transform of
$f$.  In other words the best approximation $E(f,\omega)$ shows
the "rate of decay" of the Spectral Fourier transform
$\mathcal{F}f$ of $f$. The same formula (\ref{SpecFT})  implies
the following  inequality
\begin{equation}
 E(f,\omega)=\|f-f_{\omega}\|=\left
(\int_{\omega}^{\infty}  \|x(\tau )\|^{2}_{X(\tau)} d m(\tau)
\right )^{1/2}\leq
$$
$$
\omega^{-s} \left (\int_{0}^{\infty} \tau^{2s} \|x(\tau
)\|^{2}_{X(\tau)} d m(\tau) \right )^{1/2}=\omega^{-s}\|L^{s}f\|,
s>0,\label{ApprEst}
\end{equation}
for all  functions in   $L_{2}(G)$.

The quantity $\|L^{s}f\|, s>0, $ is  a measure of smoothness of a
function $f$. In this sense  the estimate (\ref{ApprEst})
generalizes the well-known fact of the classical harmonic analysis
that the rate of approximation of a function by Paley-Wiener
functions depends on the smoothness of this function.

 For any $f\in L_{2}(G)$ we
introduce a difference operator of order $m\in \mathbb{N}$ as
$$
\Delta^{m}_{s}f=\sum^{m}_{j=0}(-1)^{m+j}C^{j}_{m}e^{ijsL}f,
$$
where $C^{j}_{m}$ is the number of combinations from $m$ elements
taking $j$ at a time.  The modulus of continuity is defined as
$$
\Omega_{m}(f,s)=\sup_{|\tau|\leq
s}\left\|\Delta^{m}_{\tau}f\right\|.
$$

In the following Theorem we construct a filtering operator which
maps $H$ into a Paley-Wiener space.

\begin{thm}If $h\in L_{1}(\mathbb{R})$ is an entire function of exponential
type $\omega$ then for any $f\in  L_{2}(G)$ the function
$$
\mathcal{Q}_{h}^{\omega}(f)=\int _{-\infty}^{\infty}h(t)e^{itL}fdt
$$
belongs to $PW_{\omega}(L).$ \label{FT}
\end{thm}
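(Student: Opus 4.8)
The plan is to apply the Spectral Fourier transform $\mathcal{F}_L$ to $\mathcal{Q}_{h}^{\omega}(f)$ and to show that the result has support in $[0,\omega]$, which by Definition \ref{DefSFT} is precisely the assertion $\mathcal{Q}_{h}^{\omega}(f)\in PW_{\omega}(L)$. First one must check that the defining integral is a bona fide element of $L_{2}(G)$. Since $L$ is self-adjoint, $e^{itL}$ is a unitary one-parameter group, so $\|e^{itL}f\|=\|f\|$ for every $t$, and therefore
$$
\int_{-\infty}^{\infty}\|h(t)e^{itL}f\|\,dt=\|f\|\,\|h\|_{L_{1}(\mathbb{R})}<\infty.
$$
Thus the integrand is Bochner integrable and $\mathcal{Q}_{h}^{\omega}(f)$ is a well-defined vector in $L_{2}(G)$.

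Next I would use that $\mathcal{F}_{L}$ is a bounded (in fact unitary) operator, so it may be interchanged with the Bochner integral, giving
$$
\mathcal{F}_{L}\mathcal{Q}_{h}^{\omega}(f)=\int_{-\infty}^{\infty}h(t)\,\mathcal{F}_{L}\left(e^{itL}f\right)dt.
$$
By the functional calculus for $L$, the transform $\mathcal{F}_{L}$ carries $e^{itL}$ into multiplication by $e^{it\lambda}$ on the direct integral $A=\int A(\lambda)\,dm(\lambda)$; this is the exponentiated form of the basic intertwining relation $\mathcal{F}_{L}(Lf)=\lambda(\mathcal{F}_{L}f)$. Writing $a=\mathcal{F}_{L}f$, this yields, for $m$-almost every $\lambda$,
$$
\left(\mathcal{F}_{L}\mathcal{Q}_{h}^{\omega}(f)\right)(\lambda)=\left(\int_{-\infty}^{\infty}h(t)e^{it\lambda}\,dt\right)a(\lambda).
$$

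The substantive input is the classical Paley-Wiener theorem. The scalar factor $\lambda\mapsto\int_{-\infty}^{\infty}h(t)e^{it\lambda}\,dt$ is, up to the reflection $\lambda\mapsto-\lambda$, the Fourier transform $\hat{h}$ of $h$; since $[-\omega,\omega]$ is symmetric the reflection is immaterial for the support conclusion. Because $h$ is an entire function of exponential type $\omega$ lying in $L_{1}(\mathbb{R})$, the Paley-Wiener theorem guarantees that $\hat{h}$ is supported in $[-\omega,\omega]$, so this scalar factor vanishes for $|\lambda|>\omega$. As $L$ is positive definite, its spectrum $\sigma(L)$ lies in $[0,\infty)$, and correspondingly the measure $m$ in (\ref{SpecFT}) is carried by $\lambda\geq 0$; intersecting $[-\omega,\omega]$ with $[0,\infty)$ shows that $\left(\mathcal{F}_{L}\mathcal{Q}_{h}^{\omega}(f)\right)(\lambda)=0$ for $\lambda\notin[0,\omega]$. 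By Definition \ref{DefSFT} this is exactly the statement $\mathcal{Q}_{h}^{\omega}(f)\in PW_{\omega}(L)$.

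The step I expect to require the most care is the interchange of $\mathcal{F}_{L}$ with the integral together with the identification of $\mathcal{F}_{L}e^{itL}$ as multiplication by $e^{it\lambda}$ within the direct-integral formalism of (\ref{SpecFT}); both are standard consequences of the spectral theorem and the boundedness of $\mathcal{F}_{L}$, but they must be invoked carefully since everything is phrased in terms of the abstract spectral representation. The genuinely nontrivial ingredient is the Paley-Wiener characterization, which is what converts the analytic hypothesis on $h$ into the spectral support condition defining $PW_{\omega}(L)$.
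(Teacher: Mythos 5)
Your argument is correct, but it follows a genuinely different route from the paper's. You work directly with Definition \ref{DefSFT}: after checking Bochner integrability, you push the unitary $\mathcal{F}_{L}$ through the integral, identify $\mathcal{F}_{L}e^{itL}$ with multiplication by $e^{it\lambda}$, and invoke the classical Paley--Wiener theorem to see that the scalar multiplier $\int h(t)e^{it\lambda}\,dt$ vanishes for $|\lambda|>\omega$; since $\sigma(L)\subset[0,\infty)$, the spectral transform of $\mathcal{Q}_{h}^{\omega}(f)$ is supported in $[0,\omega]$. The paper instead verifies the Bernstein characterization of $PW_{\omega}(L)$, i.e.\ the inequality (\ref{BI}): writing $g=\mathcal{Q}_{h}^{\omega}(f)$, it extends $e^{izL}g=\int h(t-z)e^{itL}f\,dt$ to complex $z$, uses the $L_{1}$-Bernstein inequality for $h$ to bound $\|e^{izL}g\|\leq C_{h}\|f\|e^{\omega|z|}$, concludes that each matrix element $\left<e^{izL}g,g^{*}\right>$ is entire of exponential type $\omega$ and bounded on the real axis, and applies the scalar Bernstein inequality to obtain $\|L^{k}g\|\leq C_{h}\omega^{k}\|f\|$; a second pass of the same argument then upgrades this to $\|L^{k}g\|\leq\omega^{k}\|g\|$ for all $k$, which is exactly (\ref{BI}). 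Your route is shorter and works straight from the definition, at the price of invoking the $L_{1}$ form of the Paley--Wiener theorem (the classical statement is usually given for $L_{2}$, but it does apply here, since an entire $L_{1}$ function of exponential type is bounded on $\mathbb{R}$ and hence lies in $L_{2}$) and of some care with the direct-integral formalism, which you rightly flag as the delicate step. The paper's route never touches the direct integral and stays entirely within the Bernstein-inequality machinery already set up for Paley--Wiener vectors, at the cost of a longer chain of estimates.
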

\begin{proof}

If $g= \mathcal{Q}_{h}^{\omega}(f)$ then for every real $\tau$ we
have
$$
e^{i\tau
L}g=\int_{-\infty}^{\infty}h(t)e^{i(t+\tau)L}fdt=\int_{-\infty}^{\infty}h(
t-\tau)e^{itL}fdt.
$$
 Using this formula we can extend the abstract function $e^{i\tau L}g$ to the
complex plane as
$$
e^{izL}g=\int_{-\infty}^{\infty}h(t-z)e^{itL}fdt.
$$
 Since by assumption $h\in L_{1}(\mathbb{R})$ is an entire function of exponential
type $\omega$
 we have
 $$
 h(x+iy)=\sum_{0}^{\infty}\frac{(iy)^{k}}{k!}h^{(k)}(x)
  $$
  and the $L_{1}(\mathbb{R})$-Bernstein inequality implies the following 
  $$
  \int_{-\infty}^{\infty}|h(t-z)|dt\leq e^{\omega|z|}
 \int_{-\infty}^{\infty}|h(t)|dt.
  $$
  Thus, we obtain the following inequality 
$$
\|e^{izL}g\|\leq
\|f\|\int_{-\infty}^{\infty}|h(t-z)|dt\leq\|f\|e^{\omega|z|}
 \int_{-\infty}^{\infty}|h(t)|dt.
 $$
It shows that for every function $g^{*}\in L_{2}(G)$ the function
$\left<e^{izL}g,g^{*}\right>$ is an entire function and
$$
\left|\left<e^{izL}g,g^{*}\right>\right|\leq
\|g^{*}\|\|f\|e^{\omega|z|}\int_{-\infty}^{\infty}|h(t)|dt.
$$
In other words the $\left<e^{izL}g,g^{*}\right>$ is an entire
function of the exponential type $\omega$
  which is bounded on the real line and an application of the classical
Bernstein theorem gives the inequality
$$
\left|\left(\frac{d}{dt}\right)^{k}\left<e^{itL}g,g^{*}\right>\right|
\leq\omega^{k}\sup_{t\in\mathbb{R}}\left|\left<e^{itL}g,g^{*}\right>\right|.
$$
Since
$$
\left(\frac{d}{dt}\right)^{k}\left<e^{itL}g,g^{*}\right>=
\left<e^{itL}L^{k}g,g^{*}\right>
$$
we obtain for $t=0$
$$
\left|\left<L^{k}g,g^{*}\right>\right|\leq
\omega^{k}\|g^{*}\|\|f\|\int_{-\infty}^{\infty}|h(\tau)| d\tau.
$$
 Choosing $g^{*}$ such that $\|g^{*}\|=1$ and $\left<L^{k}g,g^{*}\right>=
 \|L^{k}g\|$
  we obtain the inequality
\begin{equation}
\|L^{k}g\|\leq
\omega^{k}\|f\|\int_{-\infty}^{\infty}|h(\tau)|d\tau\equiv
C_{h}\omega^{k}\|f\|,\label{L}
\end{equation}
where
$$
 C_{h}=\int_{-\infty}^{\infty}|h(\tau)|d\tau.
$$
 Now we make
an important observation that regardless of the value of $C_{h}$
the inequality (\ref{L}) implies that $g$ belongs to
$PW_{\omega}(L)$. Indeed, for any complex number $z$ we have

$$ \|e^{izL}g\|=\left\|\sum ^{\infty}_{k=0}(i^{k}z^{k}L^{k}g)/k!\right\|
\leq C_{h}\|f\| \sum
^{\infty}_{k=0}|z|^{k}\omega^{k}/k!=C_{h}\|f\|e^{|z|\omega}.
$$ It
implies that for any $g^{*}\in L_{2}(G)$ the scalar function
$<e^{izL}g,g^{*}>$ is an entire function
 of exponential type $\omega $ which is bounded on the real axis $R^{1}$
by the constant $\|g^{*}\| \|g\|$.
 An application of the Bernstein inequality gives

$$\|<e^{itL}L^{k}g,g^{*}>\|_{C(R^{1})}=
\left\|\left(\frac{d}{dt}\right)^{k}<e^{itL}g,g^{*}>\right\|_{
C(R^{1})} \leq\omega^{k}\|g^{*}\| \|g\|.
$$ The last one gives for
$t=0$
$$ \left|<L^{k}g,g^{*}>\right|\leq \omega ^{k} \|g^{*}\| \|g\|.
$$
Choosing $g^{*}$ such that $\|g^{*}\|=1$ and
$<L^{k}g,g^{*}>=\|L^{k}g\|$ we obtain the inequality
$\|L^{k}g\|\leq
 \omega ^{k} \|g\|, k\in N$. The Lemma is proved.

\end{proof}

We will also need the following Lemma.
\begin{lem}
The following inequalities  hold for all $f\in L_{2}(G)$
\begin{equation}
\Omega_{m}\left(f, s\right)\leq
s^{k}\Omega_{m-k}(L^{k}f,s), \  \  0\leq k\leq m,  \label{mod.cont.1}
\end{equation}
 
 and 
 
\begin{equation}
\Omega_{m}\left(f, ns\right)\leq n^{m}\Omega_{m}(f,s), \   \  n, m\in \mathbb{N}.
\end{equation}
\label{tech}
\end{lem}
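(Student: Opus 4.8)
The statement to prove is Lemma (the one labeled \texttt{tech}), asserting two inequalities for the modulus of continuity: first that $\Omega_m(f,s)\le s^k\,\Omega_{m-k}(L^kf,s)$ for $0\le k\le m$, and second that $\Omega_m(f,ns)\le n^m\,\Omega_m(f,s)$ for $n,m\in\mathbb{N}$. Let me think about the structure. The key object is the difference operator $\Delta^m_\tau f=\sum_{j=0}^m(-1)^{m+j}C^j_m\,e^{ij\tau L}f$, and the modulus $\Omega_m(f,s)=\sup_{|\tau|\le s}\|\Delta^m_\tau f\|$. Since $e^{itL}$ is a one-parameter unitary group (generated by the self-adjoint $L$), everything commutes and I can work operator-theoretically.

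For the first inequality, the algebraic heart is the identity $\Delta^m_\tau=\Delta^{m-k}_\tau\,\Delta^k_\tau$, which follows because in terms of the single variable $E:=e^{i\tau L}$ the operator $\Delta^m_\tau$ is just $(E-I)^m$ — the binomial $\sum_{j=0}^m(-1)^{m+j}C^j_m E^j=(E-I)^m$ up to the sign convention, and powers factor. The plan is first to rewrite $\Delta^k_\tau=(e^{i\tau L}-I)^k$ and to express $e^{i\tau L}-I=\int_0^\tau iL\,e^{i\sigma L}\,d\sigma$, so that applying $\Delta^k_\tau$ produces a factor governed by $L^k$. More precisely I would show $\Delta^k_\tau f$ equals an iterated integral of $L^k$ against the group, giving the pointwise-in-$\tau$ bound $\|\Delta^k_\tau g\|\le |\tau|^k\|L^k g\|$ when combined with unitarity of $e^{i\sigma L}$; applying $\Delta^{m-k}_\tau$ first and $\Delta^k_\tau$ to the result, then taking the supremum over $|\tau|\le s$, yields $\Omega_m(f,s)=\sup_{|\tau|\le s}\|\Delta^{m-k}_\tau\Delta^k_\tau f\|\le s^k\sup_{|\tau|\le s}\|\Delta^{m-k}_\tau L^k f\|=s^k\Omega_{m-k}(L^kf,s)$, since $\Delta^{m-k}_\tau$ and $L^k$ commute.

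For the second inequality, the plan is to use the telescoping identity for the forward-difference operator under scaling of the step. In the one-variable picture with $E=e^{i sL}$, one has $e^{insL}-I=(E^n-I)$ and the finite-difference analogue $\Delta^m_{ns}=\bigl(E^n-I\bigr)^m=\bigl((E-I)(I+E+\cdots+E^{n-1})\bigr)^m$. Writing $S_n:=I+E+\cdots+E^{n-1}$, this gives $\Delta^m_{ns}=\Delta^m_{s}\,S_n^m$ because all these operators commute. Since each $E^r=e^{irsL}$ is unitary, $\|S_n\|\le n$, hence $\|S_n^m\|\le n^m$; but I must be careful, since I actually want to bound $\Omega_m(f,ns)=\sup_{|\tau|\le ns}\|\Delta^m_\tau f\|$ rather than only the value at $\tau=ns$. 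The cleaner route is to bound $\|\Delta^m_{\tau}f\|$ for every $|\tau|\le ns$ by factoring $\Delta^m_\tau=\Delta^m_{\tau/n}\,T^m$ with $T=I+e^{i(\tau/n)L}+\cdots+e^{i(n-1)(\tau/n)L}$ satisfying $\|T^m\|\le n^m$, and then observing $|\tau/n|\le s$, so $\|\Delta^m_\tau f\|\le n^m\|\Delta^m_{\tau/n}f\|\le n^m\,\Omega_m(f,s)$; taking the supremum over $|\tau|\le ns$ gives the claim.

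**Main obstacle.** I expect the delicate point to be the first inequality's integral representation: turning the abstract factorization $\Delta^k_\tau=(e^{i\tau L}-I)^k$ into the quantitative bound $\|\Delta^k_\tau g\|\le|\tau|^k\|L^kg\|$ requires justifying $e^{i\tau L}-I=iL\int_0^\tau e^{i\sigma L}\,d\sigma$ on the relevant domain and iterating it $k$ times while keeping the correct $L^k$ factor and the $|\tau|^k$ constant — all of which rests on the spectral-calculus bound $\|(e^{i\tau\lambda}-1)^k\|_{\infty}\le|\tau|^k\lambda^k$ for $\lambda\ge0$ coming from $|e^{i\tau\lambda}-1|\le|\tau|\lambda$. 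The scaling inequality is comparatively routine once the commuting-factorization viewpoint is adopted; the only care needed there is handling the full supremum over $|\tau|\le ns$ rather than the endpoint alone.
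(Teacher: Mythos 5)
Your proposal is correct, and for the first inequality it is essentially the paper's own argument: the paper writes $\Delta^k_t f=(e^{itL}-I)^k f$ as the iterated integral $i^{k}\int_{0}^{t}\cdots\int_{0}^{t}e^{i(\tau_{1}+\cdots+\tau_{k})L}L^{k}f\,d\tau_{1}\cdots d\tau_{k}$ and then uses unitarity of the group together with the commuting factorization $\Delta^m_\tau=\Delta^{m-k}_\tau\Delta^k_\tau$, exactly as you do. For the second inequality the paper is much terser: it simply asserts that the bound "follows from" the subadditivity $\Omega_1(f,s_1+s_2)\le\Omega_1(f,s_1)+\Omega_1(f,s_2)$, without explaining how to pass from the first-order modulus to general $m$. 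Your telescoping factorization $\Delta^m_\tau=\Delta^m_{\tau/n}\,T^m$ with $T=I+e^{i(\tau/n)L}+\cdots+e^{i(n-1)(\tau/n)L}$ and $\|T\|\le n$ is the standard way to make that step rigorous, and you are right to insist on bounding $\|\Delta^m_\tau f\|$ for every $|\tau|\le ns$ rather than only at the endpoint $\tau=ns$ — a point the paper glosses over. So your write-up is, if anything, more complete than the source on the scaling inequality, while agreeing with it on the substantive first part.
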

\begin{proof}
The following identity holds
$$
\left(e^{itL}-I\right)f=i\int_{0}^{t}e^{i\tau
 L}Lfd\tau,
$$
where $I$ is the identity operator.  Iterations of this formula give the identity
$$
\left(e^{itL}-I\right)^{k}f=i^{k}\int_{0}^{t}...\int_{0}^{t}e^{i(\tau_{1}+...+\tau_{k})L} L^{k}f d\tau_{1}...d\tau_{k},
$$
which implies (4.4).

The
second one follows from the property
$$
\Omega_{1}\left(f, s_{1}+s_{2}\right)\leq \Omega_{1}\left(f,
s_{1}\right)+\Omega_{1}\left(f,s_{2}\right),
$$
which is easy to verify.

\end{proof}

Bellow  the following function will be used
\begin{equation}
h(t)=a\left(\frac{\sin (t/n)}{t}\right)^{n},\label{functionh}
\end{equation}
where $n$ is a fixed  even integer and
\begin{equation}
a=\left(\int_{-\infty}^{\infty}\left(\frac{\sin
(t/n)}{t}\right)^{n}dt\right)^{-1}.\label{constanta}
\end{equation}
Now we construct another filtering  operator

$$
\mathcal{P}_{h}^{\omega,m}:L_{2}(G)\rightarrow PW_{\omega}(L),
$$
 which is defined as
\begin{equation}
\mathcal{P}_{h}^{\omega,m}(f)=\int_{-\infty}^{\infty}
h(t)\left\{(-1)^{m-1}\Delta^{m}_{t/\omega}f+f\right\}dt,\label{id}
 \end{equation}
 where
\begin{equation}
(-1)^{m+1}\Delta^{m}_{s}f=\sum^{m}_{j=0}(-1)^{j-1}C^{j}_{m}e^{js(iL)}f=
 \sum_{j=1}^{m}b_{j}e^{js(iL)}f-f,\label{id2}
 \end{equation}
 and
 \begin{equation}
 b_{1}+b_{2}+...
+b_{m}=1.\label{id3}
\end{equation}
  The next Theorem is an analog of the
classical Direct Approximation Theorem by entire functions of
exponential type.

\begin{thm} Let $h$ be the function defined in (\ref{functionh}) and
(\ref{constanta}) and the operator $\mathcal{P}_{h}^{\omega,m}$ is
defined in (\ref{id})-(\ref{id3}). We also assume that the
following inequality holds
\begin{equation}
n\geq m+2.\label{nm-ineq}
 \end{equation}
 For any appropriate  $n\in \mathbb{N}$ and $m\in \mathbb{N}$
 that satisfy (\ref{nm-ineq}) and for
  every natural $k$ such that
  $$
  0\leq k\leq m, k,m\in
\mathbb{N},
$$
there exists a constant $C^{h}_{k,m}>0$ such that for all $\omega
> 0$ and all
 $f\in L_{2}(G)$ the following inequalities  holds
\begin{equation}
E(f,\omega)\leq
\|\mathcal{P}_{h}^{\omega,m}(f)-f\|\leq\frac{C^{h}_{m,k}}{\omega^{k}}
\Omega_{m-k}\left(L^{k}f, 1/\omega\right), \label{J1}
\end{equation}\label{JT}
 where
$$
C^{h}_{m,k}=\int_{-\infty}^{\infty}h(t)|t|^{k}(1+|t|)^{m-k}dt, 0\leq
k\leq m.
$$
\label{DAT}
\end{thm}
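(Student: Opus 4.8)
We must show, for the specific kernel $h$ of (\ref{functionh})--(\ref{constanta}) and the operator $\mathcal{P}_h^{\omega,m}$ of (\ref{id})--(\ref{id3}), the two-sided estimate (\ref{J1}): the best-approximation $E(f,\omega)$ is bounded by $\|\mathcal{P}_h^{\omega,m}(f)-f\|$, which in turn is bounded by $C^h_{m,k}\omega^{-k}\Omega_{m-k}(L^k f,1/\omega)$.

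Let me think about how to prove this.

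The left inequality is essentially free. $E(f,\omega)=\inf_{g\in PW_\omega(L)}\|f-g\|$. If I can verify that $\mathcal{P}_h^{\omega,m}(f)\in PW_\omega(L)$, then since $E(f,\omega)$ is an infimum over $PW_\omega(L)$, immediately $E(f,\omega)\le\|\mathcal{P}_h^{\omega,m}(f)-f\|$. The membership in $PW_\omega$ should follow from Theorem \ref{FT}: $\mathcal{P}_h^{\omega,m}(f)$ is built from integrals $\int h(t)e^{js(iL)}f\,dt$ with $s=t/\omega$, i.e. from the filtering operator $\mathcal{Q}$ applied with appropriately rescaled kernels. Because $h$ has exponential type $1$ (it is $a(\sin(t/n)/t)^n$ with $n$ factors each of type $1/n$) and the substitution $s=t/\omega$ rescales the type to $\omega$, each term $\int h(t)e^{j(t/\omega)(iL)}f\,dt$ lands in $PW_\omega(L)$ by Theorem \ref{FT}; the constant $-f$ in (\ref{id2}) cancels against the $+f$ in (\ref{id}) because of the normalization (\ref{id3}). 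So the whole operator maps into $PW_\omega(L)$, and the left inequality holds.

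The right inequality is the heart of the matter, and where the real work lies. The key algebraic point is the normalization (\ref{constanta}): $\int_{-\infty}^\infty h(t)\,dt=1$. Using this, I would write
$$
\mathcal{P}_h^{\omega,m}(f)-f=\int_{-\infty}^\infty h(t)\Big\{(-1)^{m-1}\Delta^m_{t/\omega}f+f\Big\}\,dt-f=\int_{-\infty}^\infty h(t)(-1)^{m-1}\Delta^m_{t/\omega}f\,dt,
$$
where the subtracted $f$ is absorbed because $\int h(t)\,dt=1$. Taking norms and pulling the norm inside the integral gives
$$
\|\mathcal{P}_h^{\omega,m}(f)-f\|\le\int_{-\infty}^\infty |h(t)|\,\|\Delta^m_{t/\omega}f\|\,dt.
$$
Now I invoke the difference-operator estimates of Lemma \ref{tech}. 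The inner norm $\|\Delta^m_{t/\omega}f\|$ is bounded by $\Omega_m(f,|t|/\omega)$; by the first inequality (\ref{mod.cont.1}) of Lemma \ref{tech} one gains $k$ powers, $\Omega_m(f,s)\le s^k\Omega_{m-k}(L^k f,s)$, and by the second (quasi-additivity/scaling) inequality one can replace the argument $|t|/\omega$ by $1/\omega$ at the cost of a factor $(1+|t|)^{m-k}$. Combining these and factoring out $\omega^{-k}$ and $\Omega_{m-k}(L^k f,1/\omega)$ yields exactly the constant $C^h_{m,k}=\int h(t)|t|^k(1+|t|)^{m-k}\,dt$.

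The main obstacle is the careful bookkeeping in the second step: one must dominate $\Omega_{m-k}(L^kf,|t|/\omega)$ by $(1+|t|)^{m-k}\Omega_{m-k}(L^kf,1/\omega)$ uniformly in $t$. The scaling inequality $\Omega_j(g,ns)\le n^m\Omega_j(g,s)$ in Lemma \ref{tech} is stated only for integer $n$, so I must handle non-integer $|t|$ by passing to the ceiling $\lceil|t|\rceil\le 1+|t|$ and using monotonicity of the modulus of continuity in its scale argument; this is where the estimate $(1+|t|)^{m-k}$ (rather than $|t|^{m-k}$) enters and accounts for the precise form of $C^h_{m,k}$. I would also need the finiteness of $C^h_{m,k}$, which relies on the rapid decay of $h(t)=a(\sin(t/n)/t)^n$ for $n\ge m+2$: the integrand decays like $|t|^{-n+m-k+\cdots}$, integrable precisely under the hypothesis (\ref{nm-ineq}). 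This is the only place the condition $n\ge m+2$ is used, and verifying it closes the argument.
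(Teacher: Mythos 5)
Your proposal is correct and follows essentially the same route as the paper: membership of $\mathcal{P}_{h}^{\omega,m}(f)$ in $PW_{\omega}(L)$ via Theorem \ref{FT} applied to the rescaled kernels gives the left inequality, and the normalization $\int h=1$ reduces $\mathcal{P}_{h}^{\omega,m}(f)-f$ to $\int h(t)(-1)^{m-1}\Delta^{m}_{t/\omega}f\,dt$, after which Lemma \ref{tech} produces exactly the constant $C^{h}_{m,k}$. If anything, you are slightly more careful than the paper, which silently applies the integer-scaling inequality $\Omega_{m-k}(g,ns)\leq n^{m-k}\Omega_{m-k}(g,s)$ to non-integer ratios $|t|$; your passage to $\lceil|t|\rceil\leq 1+|t|$ together with monotonicity of the modulus is precisely the step needed to justify the factor $(1+|t|)^{m-k}$ in $C^{h}_{m,k}$.
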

\begin{proof}

With the  choice of $a$ as in (\ref{constanta})  and $ n\geq m+2$
 the function $h$ will have the
 following properties: 1) $h$ is an even nonnegative entire function of exponential type
one; 2) $h$ belongs to $L_{1}(\mathbb{R})$ and its
$L_{1}(\mathbb{R})$-norm is $1$;  3) the integral
\begin{equation}\int_{-\infty}^{\infty}h(t)|t|^{m}dt
\end{equation} is finite.
The formulas (\ref{id}) and (\ref{id2}) imply the next formula
$$
\int_{-\infty}^{\infty}h(t)\sum_{j=1}^{m}b_{j}e^{j\frac{t}{\omega}(iL)
} fdt=\int_{-\infty}^{\infty}\Phi(t)e^{t(iL)}fdt.
$$
where
$$
\Phi(t)=\sum_{j=1}^{m}b_{j}\left(\frac{\omega}{j}\right)h\left(t\frac{\omega}{j}\right).
$$
Since the function $h(t)$ has exponential type one, every function
$h(t\omega/j)$ has the type $\omega/j$ and because of this the
function $\Phi(t)$ has exponential  type $\omega$.

Now we estimate the error of approximation of
$\mathcal{P}_{h}^{\omega,m}(f)$  to $f$.  Since by (\ref{id})
$$
f-\mathcal{P}_{h}^{\omega,m}(f)=(-1)^{m}
\int_{-\infty}^{\infty}h(t)\Delta^{m}_{t/\omega}fdt
$$
we obtain
$$
E(f,\omega)\leq\|f-\mathcal{P}_{h}^{\omega,m}(f)\|\leq
\int_{-\infty}^{\infty}h(t)\left\|\Delta^{m}_{t/\omega}f\right\|dt\leq
\int_{-\infty}^{\infty}h(t)\Omega_{m}\left(f, t/\omega\right)dt.
$$
By using the Lemma \ref{tech} we obtain
$$
E(f,\omega)\leq \int_{-\infty}^{\infty}h(t)\Omega_{m}\left(f,
t/\omega\right)dt \leq \frac{\Omega_{m-k}\left(L^{k}f,
1/\omega\right)}{\omega^{k}}\int_{-\infty}^{\infty}h(t)|t|^{k}(1+|t|)^{m-k}dt\leq
$$
$$
\frac{{C}^{h}_{m,k}}{\omega^{k}}\Omega_{m-k}\left(L^{k}f,
1/\omega\right),
$$
where the integral
 $$
C^{h}_{m,k}=\int_{-\infty}^{\infty}h(t)|t|^{k}(1+|t|)^{m-k}dt
$$
 is finite by the choice of $h$. The inequality (\ref{J1}) is proved.
\end{proof}

Now we can formulate one of our main results about \textbf{sparse
approximation} of functions in $L_{2}(G)$.
 Namely, a combination of the Theorem \ref{SamplingTheorem} with
the Theorem \ref{JT} gives the following result about
approximation of an $f\in L_{2}(G)$  by using samples of its
orthogonal projection $f_{\omega}$ on $PW_{\omega}(L)$ or samples
of the projection $\mathcal{P}_{h}^{\omega,m}(f)$.

In the following Theorem we assume that $h$ is the function
defined in (\ref{functionh}) and (\ref{constanta}) and the
operator $\mathcal{P}_{h}^{\omega,m}$ is defined in
(\ref{id})-(\ref{id3}). Let us also recall that in the Theorem
\ref{SamplingTheorem} a subset of vertices $S$ is  such that every
vertex $v$ in $bS$ is connected to at least $K_{0}(S)$ vertices in
$S$, where $1\leq K_{0}(S)\leq d(v)$.

The next Theorem represents our result about sparse approximation
on graphs.

\begin{col} If $S\subset V(G)$ is the same as in the Theorem  \ref{SamplingTheorem}
and the condition (\ref{cl1}) along with inequality
$$
\omega<\frac{K_{0}(S)}{4}
$$
hold, then there exist functions $\Theta_{u}\in PW_{\omega}(L),
u\in S,$ and for
   any $0\leq k\leq m, \>\>\>\> k,m \in \mathbb{N},$ there exists a constant $C_{k,m}>0$ such
   that  for every $f\in L_{2}(G)$
\begin{equation}
\left \|f-\sum_{u\in S}f_{\omega}(u)\Theta_{u}\right\|\leq \left
\|f-\sum_{u\in
S}\mathcal{P}_{h}^{\omega,m}f(u)\Theta_{u}\right\|\leq
\frac{C_{k,m}}{\omega^{k}}\Omega_{m-k}\left(L^{k}f,
1/\omega\right),
\end{equation}
where $f_{\omega}$ is the orthogonal projection of $f$ on
$PW_{\omega}(L)$. \label{MAT}
\end{col}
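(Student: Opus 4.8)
The plan is to combine the sparse representation formula from Theorem \ref{SamplingTheorem} with the Direct Approximation estimate from Theorem \ref{JT}, so the proof is essentially a bookkeeping argument that chains two already-established results. First I would invoke Theorem \ref{SamplingTheorem}: since $S$ satisfies condition (\ref{cl1}) and $\omega<K_{0}(S)/4$, there exist frame-dual functions $\Theta_{u}\in PW_{\omega}(L)$, $u\in S$, such that every $g\in PW_{\omega}(L)$ admits the exact reconstruction $g=\sum_{u\in S}g(u)\Theta_{u}$. The key observation is that both $f_{\omega}=P_{\omega}f$ and $\mathcal{P}_{h}^{\omega,m}(f)$ lie in $PW_{\omega}(L)$ — the former by definition of the projector, the latter by Theorem \ref{FT} applied to the exponential-type-$\omega$ function $h$ of (\ref{functionh}). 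Therefore each of them equals the sum of its own samples against the $\Theta_{u}$:
$$
f_{\omega}=\sum_{u\in S}f_{\omega}(u)\Theta_{u},
\qquad
\mathcal{P}_{h}^{\omega,m}(f)=\sum_{u\in S}\mathcal{P}_{h}^{\omega,m}f(u)\Theta_{u}.
$$

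Once these two identities are in hand, the two error quantities appearing in the statement are just $\|f-f_{\omega}\|$ and $\|f-\mathcal{P}_{h}^{\omega,m}(f)\|$ in disguise. For the middle (right-hand) quantity I would apply Theorem \ref{JT} directly, which gives
$$
\left\|f-\mathcal{P}_{h}^{\omega,m}(f)\right\|\leq\frac{C^{h}_{m,k}}{\omega^{k}}\,\Omega_{m-k}\!\left(L^{k}f,1/\omega\right),
$$
so one simply sets $C_{k,m}=C^{h}_{m,k}$. For the leftmost inequality I would note that $f_{\omega}$ is the \emph{orthogonal} projection of $f$ onto $PW_{\omega}(L)$, hence $\|f-f_{\omega}\|=E(f,\omega)$ is the \emph{best} approximation of $f$ from that subspace; since $\mathcal{P}_{h}^{\omega,m}(f)$ is merely \emph{some} element of $PW_{\omega}(L)$, we automatically have $\|f-f_{\omega}\|\leq\|f-\mathcal{P}_{h}^{\omega,m}(f)\|$. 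This is exactly the first inequality $E(f,\omega)\leq\|\mathcal{P}_{h}^{\omega,m}(f)-f\|$ recorded at the start of (\ref{J1}).

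Substituting the two reconstruction identities into these norm inequalities yields precisely the claimed chain
$$
\left\|f-\sum_{u\in S}f_{\omega}(u)\Theta_{u}\right\|
\leq\left\|f-\sum_{u\in S}\mathcal{P}_{h}^{\omega,m}f(u)\Theta_{u}\right\|
\leq\frac{C_{k,m}}{\omega^{k}}\,\Omega_{m-k}\!\left(L^{k}f,1/\omega\right),
$$
completing the proof. The \textbf{main point to verify carefully} — and the only place where anything could go wrong — is the legitimacy of applying the reconstruction formula (\ref{sparserepresentation}) to $f_{\omega}$ and to $\mathcal{P}_{h}^{\omega,m}(f)$ rather than to a generic Paley-Wiener function; this hinges on confirming that both functions genuinely belong to $PW_{\omega}(L)$ under the stated hypotheses, which is guaranteed by the definition of $P_{\omega}$ and by Theorem \ref{FT} respectively. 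Everything else is the transitivity of the two inequalities and the identification of the constant, so I expect no substantive obstacle beyond this membership check.
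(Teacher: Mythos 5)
Your proof is correct and is exactly the argument the paper intends (the paper states this corollary as an immediate combination of Theorems \ref{SamplingTheorem} and \ref{JT} without writing out a separate proof): the reconstruction formula applied to $f_{\omega}$ and to $\mathcal{P}_{h}^{\omega,m}(f)$ turns the two sums into those functions themselves, and the chain then follows from the optimality of the orthogonal projection together with (\ref{J1}). One small correction to your membership check: $\mathcal{P}_{h}^{\omega,m}(f)\in PW_{\omega}(L)$ follows from Theorem \ref{FT} applied not to $h$ itself (which has exponential type one) but to the rescaled kernel $\Phi(t)=\sum_{j}b_{j}(\omega/j)\,h(t\omega/j)$ appearing in the proof of Theorem \ref{DAT}, which is the function of exponential type $\omega$ that actually represents the operator.
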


\section{ Lattice $\mathbb{Z}^{n}$}.

The Fourier transform $\mathcal{F}$ on $L_{2}(\mathbb{Z}^{n})$ is
a unitary  operator
$$\mathcal{F}:L_{2}(\mathbb{Z}^{n})\rightarrow
L_{2}\left(\mathbb{T}^{n}, d\xi/(2\pi)^{n}\right),
$$
where $\mathbb{T}^{n}$ is the n-dimensional torus and
$d\xi/(2\pi)^{n}$ is the normalized measure which is defined by
the formula
$$
\mathcal{F}(f)(\xi_{1},
\xi_{2},...,\xi_{n})=\sum_{(k_{1},k_{2},...,k_{n})\in
\mathbb{Z}^{n}}f(k_{1}, k_{2},...,k_{n})e^{i k_{1}\xi_{1}+i
k_{2}\xi_{2}+...+i k_{n}\xi_{n}},
$$
where $  f\in
L_{2}(\mathbb{Z}^{n}),(\xi_{1},\xi_{2},...,\xi_{n})\in [-\pi,
\pi)^{n}$ .  One can verify the following formula
$$
\mathcal{F}(Lf)(\xi)=4\left(\sin^{2}\frac{\xi_{1}}{2}+
\sin^{2}\frac{\xi_{2}}{2}+...+\sin^{2}\frac{\xi_{n}}{2}\right)\mathcal{F}(f)(\xi),
$$
where $  f\in
L_{2}(\mathbb{Z}^{n}),(\xi_{1},\xi_{2},...,\xi_{n})\in [-\pi,
\pi)^{n}$.
\begin{thm} \cite{Pes3} The spectrum of the Laplace operator $L$
 on the
lattice $\mathbb{Z}^{n}$ is the set $[0, 4n]$. A function $f$
belongs to a space $PW_{\omega}(\mathbb{Z}^{n})$ for some  $
0<\omega<4n,$ if and only if the support of $\mathcal{F}f$ is a
subset $\Omega_{\omega}$ of $[-\pi,\pi)^{n}$ on which
$$
\sin^{2}\frac{\xi_{1}}{2}+
\sin^{2}\frac{\xi_{2}}{2}+...+\sin^{2}\frac{\xi_{n}}{2}\leq
\frac{\omega}{4}.
$$

\end{thm}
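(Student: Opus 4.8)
The plan is to exploit the explicit diagonalization of $L$ furnished by the torus Fourier transform $\mathcal{F}$. The formula stated just above the theorem says precisely that $\mathcal{F}$ conjugates $L$ into the operator of multiplication by the symbol
$$
m(\xi)=4\left(\sin^{2}\frac{\xi_{1}}{2}+\cdots+\sin^{2}\frac{\xi_{n}}{2}\right)
$$
on $L_{2}\!\left(\mathbb{T}^{n}, d\xi/(2\pi)^{n}\right)$; that is, $\mathcal{F}(Lf)=m\cdot\mathcal{F}(f)$. Thus $L$ is unitarily equivalent, via $\mathcal{F}$, to the multiplication operator $M_{m}$, and I would read off both assertions from the standard spectral structure of $M_{m}$.

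For the statement about the spectrum I would compute the range of the multiplier. Each summand $\sin^{2}(\xi_{j}/2)$ runs over the closed interval $[0,1]$ as $\xi_{j}$ runs over $[-\pi,\pi)$ --- the value $0$ attained at $\xi_{j}=0$ and the value $1$ at $\xi_{j}=-\pi$ --- so by continuity and connectedness of the torus the range of $m$ is exactly $[0,4n]$, with minimum at the origin and maximum at $(-\pi,\dots,-\pi)$. Since the spectrum of a multiplication operator on an $L_{2}$ space whose underlying measure has full support is the essential range of the multiplier, and since for a continuous $m$ this essential range coincides with the closure of the range (here already the closed interval $[0,4n]$), I conclude $\sigma(L)=\sigma(M_{m})=[0,4n]$.

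For the Paley-Wiener characterization the crux is to match the abstract Spectral Fourier transform $\mathcal{F}_{L}$ of Definition \ref{Def2}, under which $L$ becomes multiplication by the spectral variable $\lambda$, against the concrete transform $\mathcal{F}$, under which $L$ becomes multiplication by $m(\xi)$. The bridge is the standard fact that the projection-valued measure of $M_{m}$ assigns to a Borel set $B\subset\mathbb{R}$ the operator of multiplication by the indicator $\chi_{m^{-1}(B)}$; equivalently the spectral measure of $L$ is the pushforward under $m$ of the multiplication structure on the torus. Taking $B=[0,\omega]$, the spectral projector $E_{[0,\omega]}$ of $L$ corresponds under $\mathcal{F}$ to multiplication by $\chi_{\{m(\xi)\le\omega\}}$. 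By Definition \ref{DefSFT}, $f\in PW_{\omega}(L)$ means $f=E_{[0,\omega]}f$, which after applying $\mathcal{F}$ reads $\mathcal{F}f=\chi_{\{m(\xi)\le\omega\}}\,\mathcal{F}f$, i.e. the support of $\mathcal{F}f$ lies in $\{\xi:m(\xi)\le\omega\}$. Dividing the defining inequality by $4$ gives exactly $\sin^{2}(\xi_{1}/2)+\cdots+\sin^{2}(\xi_{n}/2)\le\omega/4$, which is the claimed set $\Omega_{\omega}$; the restriction $0<\omega<4n$ merely guarantees that this sublevel set is a nonempty proper subset of the torus.

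The only genuinely delicate point is the identification of $\mathcal{F}_{L}$ with $\mathcal{F}$ at the level of spectral supports, namely verifying that the condition of spectral support in $[0,\omega]$ translates into the condition of $\mathcal{F}$-support in the sublevel set $m^{-1}([0,\omega])$. This is where one must invoke the description of the spectral projections of a multiplication operator as multiplications by indicators of preimages; once that is in hand, the remainder is the elementary range computation for $m$.
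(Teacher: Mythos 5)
Your proposal is correct and follows essentially the same route the paper intends: the paper states the multiplier identity $\mathcal{F}(Lf)=4\bigl(\sum_{j}\sin^{2}(\xi_{j}/2)\bigr)\mathcal{F}(f)$ immediately before the theorem and cites \cite{Pes3} for the conclusion, and your argument is exactly the standard spectral-theoretic reading of that identity (spectrum as the essential range of the symbol, spectral projections as multiplication by indicators of sublevel sets). No gaps; the only point worth noting is that the identification of the abstract Spectral Fourier transform with the concrete torus transform, which you correctly flag as the delicate step, is precisely what the citation to \cite{Pes3} covers.
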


Let's consider for simplicity the case
$\mathbb{Z}^{2}=\{(k_{1},k_{2})\}, k_{1}, k_{2}\in \mathbb{Z}$. We
will use notation
$$
S=\mathbb{Z}^{2}\setminus
\left\{(2k_{1},2k_{2})\right\}_{k_{1},k_{2}\in \mathbb{Z}}
$$
for the  set of vertices which is  a compliment of the set of
vertices $\{(2k_{1},2k_{2})\}, k_{1},k_{2}\in \mathbb{Z}$.  For
this set the assumptions of the Sampling Theorem
\ref{SamplingTheorem} will be satisfied with $K_{0}(S)=4$ and we
obtain the following fact.
\begin{thm}
If $ \omega<1, $
 then

1) the set $S$ is a uniqueness set for functions in
$PW_{\omega}(L)$;

 2) there exist functions $\Theta_{u}\in PW_{\omega}(L)$, $u\in S,$
such that for all $f\in PW_{\omega}(L)$ the following
reconstruction formula holds
\begin{equation}
f=\sum_{u\in S}f(u)\Theta_{u}.
\end{equation}
\end{thm}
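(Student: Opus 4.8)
The plan is to verify that the set $S = \mathbb{Z}^2 \setminus \{(2k_1, 2k_2)\}$ satisfies the hypotheses of the Sampling Theorem \ref{SamplingTheorem} with $K_0(S) = 4$, and then apply that theorem directly. The theorem requires two things: that $\overline{S} = S \cup bS = V(G)$, and that every vertex $v$ in $bS$ is connected to at least $K_0$ vertices in $S$. The condition $\omega < K_0/4 = 1$ is exactly the hypothesis stated in the present theorem, so once the combinatorial geometry of $S$ is pinned down, both conclusions follow immediately from Theorem \ref{SamplingTheorem}.

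**First I would** identify the boundary $bS$ explicitly. The deleted vertices are precisely the points with both coordinates even, i.e.\ $bS \subseteq \{(2k_1, 2k_2)\}$. I claim equality: $bS = \{(2k_1,2k_2)\}_{k_1,k_2 \in \mathbb{Z}}$. Each such deleted vertex $v = (2k_1, 2k_2)$ is adjacent in $\mathbb{Z}^2$ to its four nearest neighbors $(2k_1 \pm 1, 2k_2)$ and $(2k_1, 2k_2 \pm 1)$. Every one of these four neighbors has at least one odd coordinate, hence lies in $S$. Therefore $v$ is adjacent to a vertex of $S$, so $v \in bS$; this shows $bS$ equals the full set of deleted vertices, and consequently $\overline{S} = S \cup bS = \mathbb{Z}^2 = V(G)$, verifying the closure condition. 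Moreover, the same count shows each boundary vertex has exactly four $S$-neighbors, so $k_0(v) = 4$ for every $v \in bS$, giving $K_0(S) = \inf_{v \in bS} k_0(v) = 4$.

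**With these two geometric facts established**, the hypotheses of Theorem \ref{SamplingTheorem} hold with $K_0 = 4$. The bound $\omega < K_0/4$ becomes $\omega < 1$, which is exactly the assumption of the present theorem. Invoking Theorem \ref{SamplingTheorem} then yields both conclusions: $S$ is a uniqueness set for $PW_\omega(L)$, and there exist functions $\Theta_u \in PW_\omega(L)$, $u \in S$, such that $f = \sum_{u \in S} f(u)\Theta_u$ for all $f \in PW_\omega(L)$.

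**The only genuine point requiring care**—and the step I expect to be the main obstacle—is the verification that the boundary is exactly the even-even sublattice and that $K_0 = 4$ rather than something smaller. One must make sure no boundary vertex is adjacent to fewer than four points of $S$; since the grid is uniform and each deleted point's four lattice neighbors all have an odd coordinate, this is clean, but it hinges on the regularity of $\mathbb{Z}^2$ and would fail for a nonuniform deletion pattern. Everything else is a direct citation of the already-proved sampling theorem.
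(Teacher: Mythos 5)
Your proposal is correct and follows exactly the paper's route: the paper simply asserts that the assumptions of Theorem \ref{SamplingTheorem} hold for this $S$ with $K_{0}(S)=4$ and cites that theorem, while you supply the (easy but worthwhile) verification that $bS$ is precisely the even--even sublattice, that $\overline{S}=\mathbb{Z}^{2}$, and that every boundary vertex has all four of its neighbors in $S$. Nothing is missing.
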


If $S$ is the set $\{(k_{1}, 2k_{2})\}, k_{1},k_{2}\in
\mathbb{Z}$, then the number $k$ in the Theorem
\ref{SamplingTheorem} is $K_{0}(S)=2$ and we have a  similar
result for all spaces $PW_{\omega}(L)$ with $\omega<1/2.$

If $S$ is the set $\{(k_{1}, 3k_{2})\}, k_{1},k_{2}\in
\mathbb{Z}$, then the number $K_{0}(S)$ in the Theorem
\ref{SamplingTheorem} is $K_{0}(S)=1$ and we have a result similar
to the last Theorem for all spaces $PW_{\omega}(L)$ with
$\omega<1/4.$

To construct a projector
$$
\mathcal{P}_{h}^{\omega,m}:L_{2}(\mathbb{Z}^{2})\rightarrow
PW_{\omega}(L)
$$
by the formula (\ref{id}) one can use the following description of
the Schrodinger's group of operators
$$
e^{itL}f=\mathcal{F}^{-1}\left(e^{-4it\left(\sin^{2}\frac{\xi_{1}}{2}+
\sin^{2}\frac{\xi_{2}}{2}\right)}\right)\mathcal{F}f.
$$

Now one can easily reformulate all statements  of the previous
section for the lattice $\mathbb{Z}^{2}$ and each of  the sets $S$
 above.

Our results  not only  alow to reduce analysis on the lattice
$\mathbb{Z}^{2}$ to analysis on appropriate subgraph, but they
also give estimates of possible losses   of information which can
occur after such reduction.

\section{Acknowledgment}

Authors would like to thank the anonymous referee for useful and constructive
suggestions.


\begin{thebibliography}{22}





\bibitem{Akh}
J. ~Akhiezer, {\em Theory of approximation}, Ungar, NY, 1956.

\bibitem{BN}
M. ~Belkin, P.  ~Niyogi, {\em Towards a theoretical foundation for
Laplacian-based manifold methods}, Learning theory, 486--500,
Lecture Notes in Comput. Sci., 3559, Springer, Berlin, 2005.

\bibitem{BMN}
M. ~Belkin, I. ~Matveeva , P.  ~Niyogi, {\em Regularization and
semi-supervised learning on large graphs}, Learning theory,
624--638, Lecture Notes in Comput. Sci., 3120, Springer, Berlin,
2004.

\bibitem{BS}
M. Birman and M. Solomyak, {\em Spectral thory of selfadjoint
operators in Hilbert space}, D.Reidel Publishing Co., Dordrecht,
1987.

\bibitem{BCMS}
J. C. ~Bremer, R. R. ~Coifman, M. ~Maggioni, A. D.  ~Szlam,  {\em
Diffusion wavelet packets}, Appl. Comput. Harmon. Anal. 21 (2006),
no. 1, 95--112.

\bibitem {Ch}
F. R. K. ~Chung, {\em Spectral Graph Theory}, CBMS 92, AMS, 1994.

\bibitem {CM1}
R. R. ~Coifman, M. ~Maggioni, {\em Diffusion wavelets}, Appl.
Comput. Harmon. Anal. 21 (2006), no. 1, 53--94.


\bibitem {CM2}
R. R. ~Coifman, M. ~Maggioni, {\em  Diffusion wavelets for
multiscale analysis on graphs and manifolds}, Wavelets and
splines: Athens 2005, 164--188, Mod. Methods Math., Nashboro
Press, Brentwood, TN, 2006.

\bibitem {DS}
R. ~Duffin, A. ~Schaeffer, {\em A class of nonharmonic Fourier
series}, Trans. AMS, 72, (1952), 341-366.

\bibitem{FrT}
M.W. ~Frazier, R. ~Torres, {\em The sampling theorem,
$\varphi$-transform, and Shannon wavelets for $\mathbb{R},
\mathbb{Z}, \mathbb{T}, $ and $\mathbb{Z}_{N}$.Wavelets:
Mathematics and Applications}, 221-246, Stud. Adv. Math., CRC,
Boca Raton, FL, 1994.


\bibitem {G}
K.~Gr\"ochenig, {\em A Discrete Theory of Irregular Sampling},
Linear Algebra and its Applications, 193(1993), 129-150.




\bibitem {HK}
G. T. ~Herman, A. ~Kuba, {\em Discrete Tomography, Foundations,
Algorithms, and Applications}, Birkhauser, 1999.


\bibitem{MM}
Maggioni, M.; Mhaskar, H. N. Diffusion polynomial frames on metric
measure spaces. Appl. Comput. Harmon. Anal. 24 (2008), no. 3,
329--353.


\bibitem{MSW}
 A. ~Magyar, E. M. ~Stein, S. ~Wainger, {\em Discrete analogues in
harmonic analysis: spherical averages} Ann. of Math. (2) 155
(2002), no. 1, 189--208.

\bibitem{Mah}
 S. ~Mahadevan, {\em Representation Discovery Using Harmonic
 Analysis}, Synthesis Lectures on Artificial Intelegence and Mashine Learning,
 Morgan and Claypool Publishers,  2008.

\bibitem{N}
S. M.~Nikolskii, {\em Approximation of functions of several
variables and imbedding theorems}, Springer, Berlin, 1975.



\bibitem{PWR}
N.~Peinecke, F.~Walter, M.~Reiter, {\em Laplace spectra as
fingerprints for image recognition}, Computer-Aided Design,
39(2007),460-476.


\bibitem{Pes7}
 I.~Pesenson, {\em  A sampling theorem on homogeneous manifolds},
 Transactions of AMS, Vol. 352(9), (2000), 4257-4270.

\bibitem{Pes1}
I. ~Pesenson, {\em Sampling of Band limited vectors}, J. of
Fourier Analysis and Applications {\bf 7}(1), (2001), 93-100.

\bibitem{Pes3}
 I.~Pesenson, {\em  Sampling in Paley-Wiener spaces on combinatorial graphs},
 Trans. Amer. Math. Soc. 360 (2008), no. 10, 5603--5627.



\bibitem{Pes4}
 I.~Pesenson, {\em  Variational splines and Paley-Wiener spaces on combinatorial graphs},
Constr. Approximation,  29 (2009), no. 1, 1--20.

\bibitem{Pes5}
I.~Pesenson, {\em Removable sets and eigenvalue and eigenfunction
approximations on finite combinatorial graphs}, Applied and Computational Harmonic Analysis, on line.

\bibitem{PesPes}
I.~Pesenson, M. ~Pesenson {\em Eigenmaps and minimal and
bandlimited immersions of graphs into Euclidean spaces},
J. of Mathematical Analysis and Applications, on line.

\bibitem{MP1}
M. ~Pesenson, S. ~Carey, I. ~Pesenson, et al.,  {\em Astronomical
Applications of Image Inpainting and Data Dimension Reduction,}
Astronomical Data Analysis, Software and Systems, XVIII, Quebec,
2008.

\bibitem{MP2}
M. ~Pesenson, I. ~Pesenson, S. ~Carey,  {\em More to Astronomical
Images than Meets the Eye: Data Dimension Reduction for Efficient
Data Organization, Retrieval and Advanced Visualization and
Analysis of Large Multitemporal/Multispectral Data Sets,}
 213th American Astronomical Society Meeting, Long Beach, January 2009.

\bibitem{RWP}
M.~Reiter,  F.~Walter, N.~Peinecke, {\em Laplace-Beltrami  spectra
as "Shape-DNA" for surfaces and solids}, Computer-Aided Design,
38(2006), 342-366.

\end{thebibliography}
\end{document}